\newcommand{\keywords}[1]{\par\addvspace\baselineskip
\noindent\keywordname\enspace\ignorespaces#1}
\algrenewcommand\algorithmicindent{0.45em}%
\pgfplotsset{compat=1.7}
\DeclareMathAlphabet{\mathcalligra}{T1}{calligra}{m}{n}
\algrenewcommand\alglinenumber[1]{\tiny #1:}
\newcommand{\punt}[1]{}
\newcommand{\cmnt}[1]{}
\newcommand{\lble} {linearizable\xspace}
\newcommand{\lbty} {linearizability\xspace}
\newcounter{history}
\newenvironment{proofsketch}[1][Proof Sketch]{\noindent#1: }{\hfill $\Box$\\[0.4mm]} 
\newcommand{\secref}[1]{Section~\ref{sec:#1}}
\newcommand{\lineref}[1]{Line~\ref{lin:#1}}
\newcommand{\ignore}[1]{}
\newcommand{\mth} {method\xspace}
\newcommand{\lp} {LP\xspace}
\newcommand{\enode}{enode\xspace}
\newcommand{\vnode}{vnode\xspace}
\newcommand{\vh}{VertexHead}
\newcommand{\addv}{AddVertex\xspace}
\newcommand{\remv}{RemoveVertex\xspace}
\newcommand{\adde}{AddEdge\xspace}
\newcommand{\reme}{RemoveEdge\xspace}
\newcommand{\cone}{ContainsEdge\xspace}
\algrenewcommand{\algorithmiccomment}[1]{$//$ #1}
\newcommand{\ds}{data-structure\xspace}
\newcommand{\valc}{ValidateC\xspace}
\newcommand{\valv}{ValidateV\xspace}
\newcommand{\vale}{ValidateE\xspace}
\newcommand{\val}{Validate\xspace}
\newcommand{\cchead}{CCHead\xspace}
\newcommand{\cctail}{CCTail\xspace}
\newcommand{\ehead}{EHead\xspace}
\newcommand{\loccc}{locateSCC\xspace}
\newcommand{\addcc}{addSCC\xspace}
\newcommand{\remcc}{removeSCC\xspace}
\newcommand{\addvcc}{addVertexSCC\xspace}
\newcommand{\createcc}{createNewSCCwithNewV\xspace}
\newcommand{\createoldvcc}{createNewSCCwithOldV\xspace}
\newcommand{\addvncc}{addOldVInSCC\xspace}
\newcommand{\mergescc}{mergeSCC\xspace}
\newcommand{\createe}{createE\xspace}
\newcommand{\createc}{createSCC\xspace}
\newcommand{\addenode}{addENode\xspace}
\newcommand{\remenode}{remENode\xspace}
\newcommand{\remvnode}{remVNode\xspace}
\newcommand{\loce}{locateE\xspace}
\newcommand{\locv}{locateV\xspace}
\newcommand{\dfsf}{DFSFW\xspace}
\newcommand{\dfsb}{DFSBW\xspace}
\newcommand{\findsccreme}{findSCCafterRemoveE}
\newcommand{\findsccadde}{findSCCafterAddE}
\newcommand{\checkscc}{checkSCC}
\newcommand{\blongsto}{blongsToCommunity}
\begin{document}

\mainmatter  

\title{Maintenance of Strongly Connected Component in Shared-memory Graph}


%
%
\ignore{
\author{Alfred Hofmann%
\thanks{Please note that the LNCS Editorial assumes that all authors have used
the western naming convention, with given names preceding surnames. This determines
the structure of the names in the running heads and the author index.}%
\and Ursula Barth\and Ingrid Haas\and Frank Holzwarth\and\\
Anna Kramer\and Leonie Kunz\and Christine Rei\ss\and\\
Nicole Sator\and Erika Siebert-Cole\and Peter Stra\ss er}
}
\vspace{-3mm}
\author{
       Muktikanta Sa 
}

\vspace{-3mm}
\institute{ Department of Computer Science \& Engineering \\
      Indian Institute of Technology Hyderabad, India \\
      cs15resch11012@iith.ac.in}
%
%

\toctitle{Lecture Notes in Computer Science}
\tocauthor{Authors' Instructions}
\maketitle
\vspace{-0.7cm}
\begin{abstract}
In this paper, we present an on-line fully dynamic algorithm for maintaining strongly connected component of a directed graph in a shared memory architecture. The edges and vertices are added or deleted concurrently by fixed number of threads. To the best of our knowledge, this is the first work to propose using linearizable concurrent directed graph and is build using both ordered and unordered list-based set. We provide an empirical comparison against sequential and  coarse-grained. The results show our algorithm's throughput is increased between $3$ to $6$x depending on different workload distributions and applications. We believe that there are huge applications in the on-line graph. Finally, we show how the algorithm can be extended to community detection in on-line graph. 


\keywords{concurrent data structure; directed graph; strong connected components, locks; connectivity on directed graphs; dynamic graph algorithms; }
\end{abstract}

\vspace{-10mm}
\section{Introduction}
\label{sec:intro}
\vspace{-0.3cm}
Generally the real-world practical graph always dynamically change over time. Dynamic graphs are the one's which are subjected to a sequence of changes like insertion, deletion of vertices and/or edges \cite{Demetrescu2010}. Dynamic graph algorithms are used extensively and it has been studied for several decades. Many important results have been achieved for fundamental dynamic graph problems and some of these problems are very challenging i.e,finding cycles, graph coloring, minimum spanning tree, shortest path between a pair of vertices, connectivity, 2-edge \& 2-vertex connectivity, transitive closure, strongly connected components, flow network, etc (see, e.g., the survey in \cite{Demetrescu2010}). The biological graph networks are very complicated due to their complex layered architecture and size. Graphs Networks, typically involving finding the perfect match or similarities on gene expression for evolution like disease progression \cite{Joseph12} are even more challenging to the research community.

We have been specifically motivated by largely used  problem of fully dynamic evolution \emph{Strongly Connected Components}(SCC). Detection of SCC in dynamically changing graph affects a large community both in the theoretical computer science and the network community. SCC detection on static networks fails to capture the natural phenomena and important dynamics. Discovering SCCs on dynamic graph helps uncover the laws in processes of graph evolution, which have been proven necessary to capture essential structural information in on-line social networking platforms (facebook, linkedin, google+, twitter, quora, etc.). SCC often merges or splits because of the changing friendship over time. A common application of SCC on these social graph is to check weather two members belong to the same SCC (or community). So, we define the \emph{SCC(id1,id2)}: which checks if there is a directed path from $id1$ to $id1$ and the other way round. In general, a social network graph handles the concurrency control over a set of users or threads running concurrently. A thread as a block of code is invoked by the help of methods to access multiple shared memory objects atomically.

 In this paper, we present a new shared-memory algorithm called as \emph{SMSCC} for maintaining SCC in fully dynamic directed graphs.The following are the key contributions of this work:
\begin{enumerate}
    \item  Firstly, we designed a new incremental algorithm(SMDSCC) for maintaining SCC dynamically. i.e, after inserting an edge or a vertex how quickly we update the SCC.
    \item Secondly, we designed a decremental algorithm(SMISCC) for maintaining SCC dynamically. i.e, after deleting an edge or a vertex how quickly we update the SCC.
    \item  Thirdly, a algorithm for maintaining fully dynamic SCC(SMSCC).
    \item An empirical comparison against sequential and coarse-grained.
    \item Our algorithm is work-efficient for most on-line graphs.
    \item An application suite : community detection on-line graph.
     
\end{enumerate}
\noindent
We have not found any comparable concurrent data-structure for solving this strongly connected components problem in shared-memory architecture. Hence we crosscheck against sequential and coarse-grained implementations. 
\subsection{Background and Related Work}

\noindent Let a concurrent directed graph $G = (V,E)$, $G(V)$ is a set of vertices and set of $G(E)$ directed edges. We use \emph{Adj(u)} to denote the set of neighbors of a vertex $u$. The $G(E)$ is collection of both outgoing and incoming neighbors, i.e., \emph{Adj(u)} = \{ for outgoing edges $v : \langle u, v \rangle$ \& for incoming edges $w: \langle w, u \rangle \in G(E)$\}. Each edge connects an ordered pair of vertices currently belongs to $G(V)$. And this $G$ is dynamically being modified by a fixed set of concurrent running threads. Our dynamic graph setting, threads can perform insertion/deletion of edges and insertion of vertices. We assume that all the vertices have unique identification key, which is captured by $val$ field in Gnode structure as shown in the Section \ref{sec:const-scc-graph}. 
\noindent
\begin{definition}
\hspace{-2mm}\textnormal{(Reachability)},Given a graph G=(V,E) and two vertices $u,v\in G(V)$, a vertex $v$ is reachable from another vertex $u$ if there is a path from $u$ to $v$. 
\end{definition}
\vspace{-3mm}
\begin{definition}
\hspace{-2mm}\textnormal{(SCC)}, A graph is strongly connected if there is a \emph{directed path} from any vertex to every other vertex. Formally,\\
Let $G=(V, E)$ be a directed graph. We say two nodes $u,v\in G(V)$ are called  strongly  connected  iff  $v$ is reachable from $u$ and also $u$ is reachable from $v$. A strongly connected component(or SCC) of $G$ is a set $C \subseteq G(V)$ such that: 
 \begin{enumerate}
 \item $C$  is not empty.
 \item For any  $u, v \in C$: $u$ and $v$ are strongly connected.
 \item For any $u \in C$ and $v \in V-C$: $u$ and $v$ are not strongly connected.
 \end{enumerate}
\end{definition}

 Apart from the definition, SCC also satisfies the equivalence relation on the set of vertices:
\textit{Reflexive:} every vertex $v$ is strongly connected to itself. \textit{Symmetric:} if $u$ is strongly connected to $v$, then $v$ is strongly connected to $u$. \textit{Transitive:} if $u$ is strongly connected to $v$ and $v$ is strongly connected to $w$, then $u$ is also strongly connected to $w$. 
\vspace{-0.4cm}
\subsection{Related Work}
There have been many parallel computing algorithms proposed for computing SCC both in directed and undirected graphs. Hopcroft and Tarjan \cite{HopcroftTarjan:1973} presented the first algorithm to compute the connected components of a graph using the depth first searches(DFS) approaches. Hirschburg et al. \cite{Hirschberg:1979} presented a  novel parallel algorithm for finding the connected components in an undirected graph. Shiloach and Vishkin \cite{ShiloachVishkin:1982} proposed an parallel computing \emph{O(logV)} algorithm. In 1981, Shiloach and Even\cite{Shiloach:1981} presented a first decremental algorithm that finds all connected components in dynamic graphs, only edges are deleted.

Henzinger and King \cite{Henzinger:1999} also proposed a new algorithm that maintains spanning tree for each connected components, which helps them to update the \ds quickly only when deletion of edge occurs.
 
The main drawback of these algorithms is, they are expensive and need more space for each change to the \ds. Also they don't utilize the advantages of multi-core or multi-processor architecture.

 In 2014, Slota, et. al., proposed a parallel multistep based algorithm using both BFS and coloring technique to detect the SCC in large graphs. Later they used the trimming methodology to reduce the search space of the graph to achieve better performance. Recently Bender et. al, \cite{Michael} proposed incremental algorithm to maintain the SCC of a dynamic graph.  Also Bloemen. et.al., \cite{Bloemen:2016} proposed a novel parallel on-the-fly algorithm for SCC decomposition in multi-core system, they used advantages of Tarjan's algorithm. 

Bader. et. al., \cite{Bader2009} developed a \ds known as STRINGER for dynamic graph problems. They used combination of both adjacency matrices and Compressed Sparse Row (CSR) representation of graph. And they claimed that the STRINGER helps faster insertions and better spatio-temporal locality as compare to the adjacency lists representations. Later they also developed a CUDA version of the STRINGER called that cuSTRINGER\cite{Green2016}, which supports dynamic graph algorithms for GPUs. 

None of above proposed algorithms clarify how the internal share-memory access is achieved by the multi-threads/processors and how the memory is synchronized, whether the \ds is linearizable or not, etc. In this paper we able to address these problems.

The rest of the paper is organized as follows. In the \secref{System-Model-Preliminaries}, we define the system model, preliminaries and design principles. In \secref{overview} we define the high level overview of the algorithm \adde \& \reme. We define the \ds of SCC-graph in \secref{const-scc-graph} and in the \secref{algorithms} we define technical details of all our algorithms and some of the pseudo-codes. In the \secref{correcteness-proof} we give high level correctness proof and in the \secref{perf-analysis} we analyze the experimental results. Finally we concluded in the \secref{conclusion-future-directions} along with future direction and discussion.
\vspace{-0.4cm}

\section{System Model \& Preliminaries}
\label{sec:System-Model-Preliminaries}
\vspace{-2mm}
In this paper, we have considered that our system consists of fixed set of $p$ processors, accessed by a finite set of $n$ threads \emph{$T_1$, $T_2$ ...., $T_n$} that run in a completely asynchronous manner and communicate through shared objects on which they perform atomic \emph{read}, \emph{write}, \emph{fetch-and-add}(FAA) operations and \emph{lock() \& \emph{unlock()}}. A FAA operation takes two arguments (loc, incVal), where \texttt{loc} is the address location from  where it fetches the value, then adds \texttt{incVal} to it and then writes back to the result \texttt{loc}.

We assume that each thread has a unique identifier, it is assigned at the time of thread creation. Each thread invokes a method which may be composed of shared-memory objects and local cipherings. We make no assumptions about the relative speeds of the threads and assume none of these processors and threads fail. 

As we said earlier our proposed algorithms are implementations of shared objects and a share object is an abstraction set of methods defined as \texttt{SCC} class in the Section \ref{sec:const-scc-graph}. It has set of methods and each method has its sequential specification. To prove a concurrent data structure to be correct, \textit{\lbty} proposed by Herlihy \& Wing \cite{HerlWing:1990:TPLS} is the standard correctness criterion. Anytime a thread invokes a method for an object, it follows until it receives a response. In may be the case a method's invocation is pending if has not received a response. For any sequential history in which the \mth{s} are ordered by their \lp{s}. 

\noindent \textbf{Progress:} An execution is \emph{deadlock\text{-}free} if it guarantees minimal progress in every \emph{crash\text{-}free}\cite{opodis_Herlihy} execution, and maximal progress if it is starvation-free. An execution is \emph{crash\text{-}free} if it guarantees
minimal progress in every uniformly isolating history, and maximal progress in some such history\cite{opodis_Herlihy}.


\noindent \textbf{Design Principles}: We developed a set of correct behaviour for our algorithm and implementation.
\begin{enumerate}
    \item \textit{thread-safety}: The SCC-graph \ds can be shared by fixed number of multiple threads at all times, which ensures all fulfill their requirement specifications and behave properly without unintended interaction.
    \item \textit{lock-freedom}: apply non-blocking techniques to provide an implementation of thread-safe C++ dynamic array based on the current C++ memory model.
    \item \textit{portability}: Generally our algorithms do not rely on specific hardware architectures, rather it is based on asynchronous memory model.
    \item \textit{simplicity}: The algorithm keeps the implementation simple to allow the correctness verification, like linearizability or model-based testing.
\end{enumerate}

\noindent \textbf{Notations:}
We denoted $\downarrow$, $\uparrow$ as input and output arguments to each method respectively and  our pseudo-code is mixed of C++ and JAVA language format.

\section{An Overview of the Algorithm}
\label{sec:overview}
\vspace{-0.2cm}
Before getting into the technical details of the algorithm, we first provide an overview of the design. The SCC class supports some basic operations: \addv, \adde, \reme, checkSCC, blongsTo, etc. and all of these methods are dead-lock free. The high-level overview of the \adde and \reme  methods are given bellow and the technical details are in the Section \ref{sec:algorithms}.
\\
\noindent
\textbf{\adde(u, v)}:
\vspace{-0.2cm}
\begin{enumerate}
    \item Checks the presence of vertices $u$, $v$ and edge($u,v$) in the SCC-Graph. If both vertices are present \& the edge is not present, adds $v$ in the $u$'s edge list and adds -$u$ in the $v$'s edge list, else returns false.
    \item After adding the edge successful, checks the \texttt{ccid} of both the vertices.
    \item If \emph{u.ccid} is same as \emph{v.ccid}, returns true, as no changes to the current SCC, else goto step 4.
    \item Checks the reachability path from vertex $v$ to $u$, if it is true, goto step 5, else returns true, as no changes to the current SCC.
    \item Runs the limited version of Tarjan's algorithm, process the affected SCCs along with its vertices and edges, merge them all to create a new SCC.
    \begin{itemize}
        \item At first it creates a new scc with any one old vertex, later adds rest of vertices to that newly created SCC and then disconnects from old SCC.
    \end{itemize} 
\end{enumerate}
\noindent
\vspace{-2mm}
\textbf{\reme(u, v)}:
\begin{enumerate}
    \item Checks the presence of vertex $u$, $v$ and edge($u,v$) in the SCC-Graph. If both are present \& edge is  present, removes $v$ from the $u$'s edge list and removes -$u$ from the $v$'s edge list, else returns false.
    \item After successful deleting the edge, checks the \texttt{ccid} of both the vertices.
    \item if \emph{u.ccid} is not same as \emph{v.ccid}, returns true, as no changes to the current SCC. Else goto step 4.
    \item Runs the forward and backward DFS algorithm(the limited version of Kosaraju’s algorithm), process all the affected vertices belongs to that SCC and creates new SCCs. 
    \begin{itemize}
        \item For each new iteration of affected vertices.
        \begin{itemize}
        \item Creates a new scc with any one of the old vertex belongs to it, later adds rest of vertices to that newly created SCC and then disconnects it from the old SCC.
    \end{itemize} 
    \end{itemize} 
\end{enumerate}

\vspace{-0.4cm}
\section{Construction of SCC-Graph structure}
\label{sec:const-scc-graph}
\vspace{-0.2cm}
In this section we present the node structures of vertex, edge and scc to construct the SCC graph. The node structures are all based on the same basic idea of lazy set implementation using linked list. This \ds is designed similarly based on the adjacency list representation of any graph. It is implemented as a collection (list) of SCCs, wherein each SCC holds the list of vertex set belongs to it, and each vertex holds the edge list (both incoming and outgoing edges). We represent all incoming edges with negative sign followed by \texttt{val} and outgoing edges with the \texttt{val}, as shown in the Fig \ref{fig:sccgraph}b.

The \texttt{Gnode} structure(similar as \cite{PeriSS16}) is a normal node and has five fields. The \texttt{val} field is the actual value of the node. If it is a vertex node, it stores the vertex id, if it is an outgoing edge, it stores the \texttt{val} of the destination vertex, if it is an incoming edge, it stores the negative of source vertex's \texttt{val}. The main idea of storing both incoming and outgoing edges for each vertex helps to explore the graph backward and forward manner respectively. And also it helps to trim the SCC-Graph after deleting a vertex, i.e, once a thread successfully deleted a vertex, all its incoming and outgoing edges needs to be removed quickly instate of iterating over whole SCC-Graph. The vertex and edge nodes are sorted in the \texttt{val} (lower to higher) order, it provides an efficient way to search when an item is absent. The boolean \texttt{marked} field is used to set the node and helps traversal to the target node without lock, we maintain an invariant that every unmarked node is reachable from the sentinel node \texttt{Head}. If a node is marked, then that is not logically present in the list. 
Each node has a \texttt{lock} field, that helps to achieve the \emph{fine\text{-}grained}  concurrency. Each node can be locked by invoking \texttt{lock()} and \texttt{unlock()} methods.It just a fine-grained locking technique, helps multiple threads can traverse the list concurrently. The \texttt{vnext} \& \texttt{enext} fields are the atomic references to the next vertex  node in the vertex list and the next edge node in the edge list of a vertex respectively.

\begin{minipage}{\linewidth}
      \centering
\begin{minipage}{.33\textwidth}
 \begin{verbatim}
unsigned long ccid; 
unsigned long ccCount
typedef struct Gnode{
  long val;   
  bool marked; 
  Lock lock;
  struct Gnode *vnext;
  struct Gnode *enext; 
}slist_t;
typedef struct CCnode{
  long ccno;   
  bool marked;
  Lock lock;
  struct Gnode* vnext;
  struct CCnode *next;
}cclist_t
 class SCC{
  CCnode CCHead, CCTail;
  bool AddVertex(u);
  bool RemoveVertex(u)
  bool AddEdge(u, v);
  bool RemoveEdge(u, v);
  bool checkSCC(u,v); 
  int  blongsTo(v); 
};
 \end{verbatim}
\end{minipage}
\hspace{0.04\linewidth}
\begin{minipage}{.55\textwidth}
\begin{figure}[H]
\centering
\captionsetup{font=scriptsize}
  \centerline{\scalebox{.90}{\input{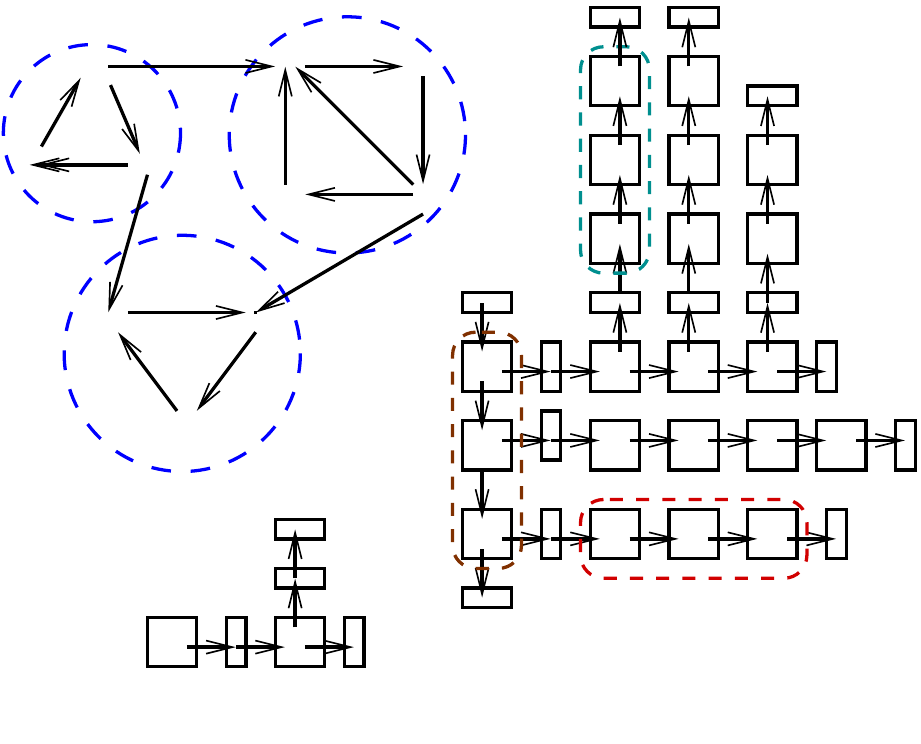_t}}}
   \caption{(a). An example of a directed graph having three SCCs. (b). The SCC-Graph representation of (a), Each SCCs have their own \texttt{ccno} and vertex list, each vertex have their own adjacency vertex(both incoming(-ve) and outgoing) represent in edge list,  e.g. vertex $10$ present in SCC $3$ and it has an incoming edge($\text{-}9$) and an outgoing edge($8$). (c). Structure of a new SCC, whenever a new vertex is added, a new SCC is created with new vertex and then inserted at the beginning of the \texttt{CCHead} in the SCC-Graph.}
   \label{fig:sccgraph}
\end{figure}
\end{minipage}
\end{minipage}

\ignore{

\begin{figure}[ht]
   \centering
   \captionsetup{font=scriptsize}
   \centerline{\scalebox{.90}{\input{SCCFigs/sccgraphaddcc.pdf_t}}}
   \caption{(a). Example of a directed graph having three SCCs. (b). The SCC-Graph representation of (a), Each SCCs have their own \texttt{ccno} and vertex list, each vertex have their own adjacency vertex(both incoming(-ve) and outgoing(+ve)) represent in edge list. e.g. vertex $10$ present in SCC $3$ \& has one incoming edge($\text{-}9$) and one outgoing edge($8$). (c). Structure of new SCC:  whenever a new vertex is added, a new SCC is created with new vertex and then inserted at the beginning of the SCC-Graph.}
   \label{fig:sccgraph}
\end{figure}

\begin{figure}[ht]
   \centering
   \includegraphics[width=8cm, height = 5.5cm]{SCCFigs/graph}
   \caption{An example of a directed acyclic graph in the shared memory which is being accessed by multiple threads. Thread $T_1$ is trying to add a vertex $10$ to the graph. Thread $T_2$ is concurrently invoking a remove vertex $3$. Thread $T_3$ is also concurrently performing an addition of directed edge from vertex $9$ to vertex $8$ and will later create a cycle.}
   \label{fig:cloud}
\end{figure}

\begin{multicols}{2}
 \begin{verbatim}
unsigned long ccid; 
unsigned long ccCount
typedef struct Gnode{
    long val;   bool marked; 
    Lock lock;
    struct Gnode *vnext, *enext; 
}slist_t;
typedef struct CCnode{
    long ccno;   bool marked;
    Lock lock;
    struct Gnode* vnext;
    struct CCnode *next;
}cclist_t
 class SCC{
    CCnode CCHead, CCTail;
    bool checkSCC(u,v); 
    int  blongsTo(v); 
    bool AddVertex(u);
    bool RemoveVertex(u)
    bool AddEdge(u, v);
    bool RemoveEdge(u, v);
};
 \end{verbatim}
\end{multicols} 
}
The \texttt{CCnode} structure is used for holding all vertices belonging to a SCC. Like \texttt{Gnode}, it has five fields. The \texttt{ccno} field is the actual scc key value and unique for each SCC. Once a key assigned to a SCC, same key will never generate again. We assume our system provides infinite number of unique key and had no upper bound. The boolean \texttt{marked} and \texttt{lock} have same meaning as \texttt{Gnode}. The \texttt{vnext} and \texttt{next} fileds are the atomic references to vertex head(VH) and next \texttt{CCnode}.
We have two atomic variables \texttt{ccid} and \texttt{ccCount} used to hold the unique id for each CCnode and total number of SCCs respectively. 


Finally the \texttt{SCC} class is the actual abstract class, which coordinates all operation activities. This class uses two type of nodes, \texttt{Gnode} and \texttt{CCnode}. The vertex and edge nodes are represent by Gnode and the SCC nodes are represented by CCnode and also has two sentinel nodes \texttt{CCHead} and \texttt{CCTail}. The \emph{SCC} class supports four basic graph operations \emph{\addv, \adde} and \emph{\remv, \reme}, and also supports some application specific methods, \emph{\checkscc, \blongsto}, etc. The detail working and pseudo code is given in the next section. 

\vspace{-0.6cm}
\section{Algorithms}
\label{sec:algorithms}
\vspace{-0.3cm}
\noindent 
In this section we present SMSCC, the actual algorithm for maintaining strongly connected components of fully dynamic directed graph in a shared memory system. The edges and vertices are added/removed concurrently by fixed set of threads. In \secref{overview} we discussed the high level overview of two methods. The technical details of all the methods are discussed here. 

\ignore{
\begin{itemize}
    \item \textit{\adde(u, v)}: Creates a new edge if not present earlier and then checks whether the SCC is affected, if it is, tries to restore the SCC, if unable to do that returns false.
    \item \textit{\reme(u, v)}: Tries to delete the edge if it is already present and then checks whether the SCC is affected, if it is, try to restore the SCC, if unable to do that returns false.
    \item \textit{\addv(u)}: Adds a new SCC having the vertex $u$ to the SCC-graph, if the vertex is not present earlier. We assume there is no duplicate vertex in the SCC-graph, means once a vertex is added, same vertex will never be invoked by \addv method on this key again.
    \item \textit{\remv(u)}: Removes a vertex $u$ in a SCC from the SCC-graph, if the vertex is present earlier. Then checks whether the SCC is affected, if it is, try to restore the SCC, if unable to do that returns false. After that it deletes all outgoing and incoming edges.
    \item \textit{\checkscc(u, v)}: Checks whether $u$ and $v$ are in the same strongly connected component at a given instance. This method used for different applications, to check whether two he persons are present in the same community or not.
    \item \textit{\blongsto(u)}: returns the \emph{ccno} of an SCC, in which $u$ is belongs to. This method is used for applications, to find the community of a person.
  \end{itemize}
}
\vspace{-0.3cm}
\subsection{Edge or Vertex Insertion}
\label{subsec:adde}
\begin{minipage}{\linewidth}
      \centering
\begin{minipage}{.71\textwidth}
After inserting an edge to the graph, how quickly we update the SCC instate of starting everything from the scratch. The details of the algorithm is given bellow. If we allowed only insertion of edges or vertices, called it as an incremental algorithm. For this we used the modified version of Tarjan's\cite{HopcroftTarjan:1973} algorithm to restore the affected SCC after inserting an edge iff it violates the SCC-Graph. Whereas \addv will not affect the SCC-Graph. 

\hspace{4mm}To add an edge, we invoke the $\adde(u, v)$ method presented in the Algorithm \ref{alg:adde}. First it checks the presence of vertices $u$ and $v$ by invoking the \emph{\loccc} method (Algorithm \ref{algo:locatescc}) from \lineref{adde-1ftloc} to \ref{lin:adde-endifloc}. If any one of these

\end{minipage}
\hspace{0.04\linewidth}
\begin{minipage}{.22\textwidth}
\begin{figure}[H]
\centering
\captionsetup{font=footnotesize}
   \centerline{\scalebox{0.50}{\input{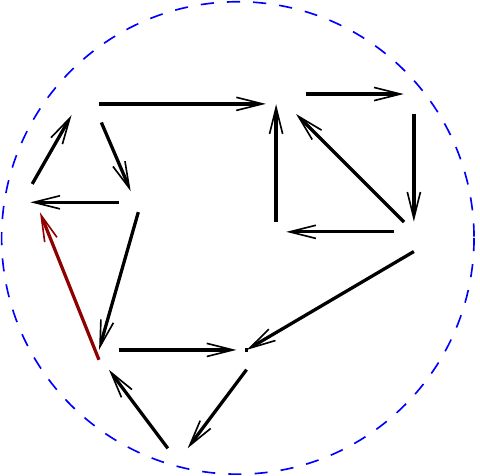_t}}}
   \caption{An example after addition of the edge $(8,3)$ in the Fig \ref{fig:sccgraph}(a)}
   \label{fig:sccadde}
\end{figure}
\end{minipage}
\end{minipage}

\noindent
vertices is not present or the edge is present, we simply return false. After successful check of $u$ and $v$, in the \lineref{adde-addeuv} we try to add the edge node $v$(outgoing edge) in the $u$'s edge list and the edge node $\text{-}u$(incoming edge) in the $v$'s edge list. After successful addition of both the edges, we check if any changes to the SCC-Graph. For that, first we check the \texttt{ccno} of both the vertices(in the \lineref{adde-checkccno}). If they are equal, no changes to the SCC-Graph as it is added within a SCC, and we return true. On the other hand we check if there is a reachable path from vertex $v$ to $u$, if it is, we merge all the SCCs which are in the reachable path by invoking the method \emph{\findsccadde()}(in the \lineref{adde-findscc}). If the reachable path is not exist, the SCC remain unchanged, we simply return true. \\
\noindent While merging all the affected SCCs by invoking \emph{\findsccadde()} method, we only consider the vertices and edges which are affected due to addition of edge $(u, v)$. We used the modified version of the Tarjan's algorithm \cite{HopcroftTarjan:1973} because all the vertices which are in the reachable path are pushed to the stack \emph{Stk} in one iteration and then popped all to build a new SCC in one iteration as well. When the \findsccadde method is called, it creates a local stack \texttt{Stk} and other variables for processing the Tarjan's algorithm and merging the affected SCCs to single SCC. In the process of merging, for the first popped vertex , say $x$, we create a new SCC, say \texttt{newcc}(with its edges), and add it at the beginning of the SCC-Graph and then disconnect it from the old SCC, this is done by the method \emph{\createoldvcc()}, which is invoked in the \lineref{mergscc:createoldvcc}, it is similar to the \emph{\addvcc()} method(Algorithm \ref{algo:addvcc}). From the second popped vertex onwards, we just add the vertex(with its edges) to the \texttt{newcc} to the sorted position and detached the link from the old SCC, this is done by the method \emph{\addvncc()}, which is invoked in the \lineref{mergscc:addvncc}. Any time we are inserting or detaching a vertex from the list we validate as some other threads concurrent may add or delete to the predecessor or successor of that vertex. We always maintain a invariant that any unmarkable node is always reachable from the respective \texttt{Head} of the list. In the Fig \ref{fig:sccadde}, we have shown an example after addition of the edge $(8,3)$ in the Fig \ref{fig:sccgraph}(a), how all three SCCs are merged to form a new SCC. \\
\noindent A new vertex \texttt{newv} is added by invoking \emph{\addv} method. Each time this method is called with new vertex id, which is generated from the last vertex id plus one. This increment is done by atomic operation fetch-and-add (FAA). We assume all vertices have unique id and the system has unbounded number of such keys, once it is added to SCC-Graph, will never assign this id to any other vertex. Each time a new SCC also created with new \texttt{ccno}, say \texttt{newcc}. After that \texttt{newv} is added to \texttt{newcc} and then \texttt{newcc} is inserted at the beginning of the SCC-Graph and it never affects the properties of SCC-Graph. The structure of \texttt{newcc} is shown in the Fig \ref{fig:sccgraph}(c).
\vspace{-0.3cm}
\subsection{Edge or Vertex Deletion}
\label{subsec:reme}
\begin{minipage}{\linewidth}
      \centering
\begin{minipage}{.71\textwidth}
Like \adde,  after deleting an edge or a vertex from the graph, how quickly we update the SCC. If we allowed only deletion of edges or vertices to the SCC-graph, called it as decremental algorithm. We used the limited version of Kosaraju's \cite[Chap 22, Sec 22.5]{Cormen} algorithm to resort the affected SCC after deleting an edge or a vertex only iff it violates the SCC-Graph. To remove an edge, we invoke the $\reme(u, v)$ method (Algorithm \ref{alg:reme}). First we check the presence of vertices $u$ and $v$ in the SCC-Graph by invoking the \emph{\loccc} method (Algorithm \ref{algo:locatescc}) from \lineref{reme-1ftloc} to \ref{lin:reme-endifloc}. If any one of these vertices or the edge is not present, we simply return false. After successful presence of $u$ and $v$(in the \lineref{reme-remeuv}) we try to remove the edge node $v$(outgoing edge) in the $u$'s vertex list and the edge node $\text{-}u$(incoming edge) in the $v$'s vertex list, if present earlier. 

\end{minipage}
\hspace{0.04\linewidth}
\begin{minipage}{.22\textwidth}
\begin{figure}[H]
\centering
\captionsetup{font=footnotesize}
  \centerline{\scalebox{0.5}{\input{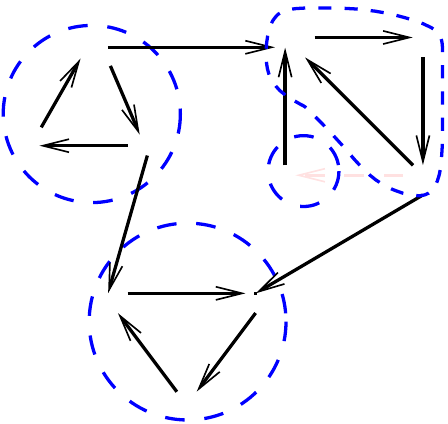_t}}}
   \caption{An example after deletion of the edge $(8, 7)$ in the Fig \ref{fig:sccgraph}(a), the SCC  breaks to two new SCCs.}
   \label{fig:sccreme}
\end{figure}
\end{minipage}
\end{minipage}


\begin{algorithm}[H]
\captionsetup{font=footnotesize}
	\caption{It takes the input \texttt{pred} \& \texttt{curr} of type $<T>$, where $<T>$ is either a slist\_t(vertex or edge node) or cclist\_t node. It returns true with the invariant an unmarkable node is reachable from Head or else it returns false}
	\label{algo:validateCC}
	\begin{algorithmic}[1]
	\footnotesize
		\Procedure{$bool$  \val($<\text{T}>pred \downarrow, <\text{T}>curr\downarrow$ )}{}
	   \State{ return  (pred.marked = false $\wedge$ curr.marked = false $\wedge$ pred.next = curr);}
	   \EndProcedure
        \algstore{validate}
	\end{algorithmic}
\end{algorithm}
\vspace{-0.8cm}
\begin{algorithm}[H]
\captionsetup{font=footnotesize}
	\caption{It takes input as vertex \texttt{key} \& it's SCC, returns the exact location of key in vertex list of scc.}
	\label{algo:locatev}
	\begin{algorithmic}[1]
	\footnotesize
	    \algrestore{validate}
		\Procedure{\locv($key \downarrow, currc \downarrow, predv \uparrow, currv\uparrow$)}{}
	   \State{predv $\gets$ currc.vnext;}
	   \State{currv $\gets$ predv.vnext;}
	   \While{(currv.val $<$ key )}
    	   \State{predv $\gets$ currv;}
    	   \State{currv $\gets$ currv.vnext;}
       \EndWhile
       \EndProcedure
        \algstore{locv}
	\end{algorithmic}
\end{algorithm}

\vspace{-0.8cm}
\begin{algorithm}[H]
\captionsetup{font=footnotesize}
	\caption{It takes input as \texttt{key}, returns false if key is not present, else it returns true and the references to predc \& currc of SCC and the references to predv \& currv vertex having val == key.}
	\label{algo:locatescc}
	\begin{algorithmic}[1]
	\footnotesize
	    \algrestore{locv}
		\Procedure{ $bool$ \loccc($key \downarrow, predc \uparrow, currc \uparrow, predv \uparrow, currv\uparrow$ )}{}
	   \State{predc $\gets$ \cchead;}
	   \State{currc $\gets$ predc.next;}
	   \While{(currc $\neq$ \cctail)} \label{lin:loccc:whilecc}
    	   \State{predv $\gets$ currc.vnext;}
    	   \State{currv $\gets$ predv.vnext;}
    	   \While{(currv.val $\leq$ key )} 
    	       \If{(currv.val = key $\wedge$ currv.marked = false)} \label{lin:loccc:key}
    	        \State{return true;}
    	       \EndIf
    	   \State{predv $\gets$ currv;}
    	   \State{currv $\gets$ currv.vnext;}
    	   \EndWhile
    	\State{predc $\gets$ currc;}
    	\State{currc $\gets$ currc.next;}\label{lin:loccc:while:lastread}
	   \EndWhile
	   \State{return false;}
        \EndProcedure
        \algstore{locatescc}
	\end{algorithmic}
\end{algorithm}
\vspace{-0.8cm}

\begin{algorithm}[H]
\captionsetup{font=footnotesize}
	\caption{It inserts a newcc with old vertex to SCC-Graph at the CCHead position.}
	\label{algo:createoldvcc}
	\begin{algorithmic}[1]
	\footnotesize
	    \algrestore{locatescc}
		\Procedure{$bool$  \createoldvcc($predv \downarrow, currv\downarrow$)}{}
	       \State{predc $\gets$ \cchead;}
    	   \State{currc $\gets$ predc.next;}
    	   \State{newcc $\gets$ \createcc(currv$\downarrow$); }
    	   \State{predc.lock();     currc.lock();}
    	    \State{predv.lock();     currv.lock();}
    	     \If{(\valc(predc$\downarrow$, currc$\downarrow$) $\wedge$ \valv(predv$\downarrow$, currv$\downarrow$))}    
	           \State{predv.vnext $\gets$ currv.vnext}
	            \State{newcc.next $\gets$ currc;}
	            \State{predc.next $\gets$ newcc;} \label{lin:createoldv:newcc}
	            
	            \State{currv.unlock();      predv.unlock();}
	            \State{currc.unlock();      predc.unlock();}
	            \State{ccCount $\gets$ ccCount + 1;} // atomic increment (FAA(1))
	            \State{return true;}
	        \Else
	            \State{currv.unlock();      predv.unlock();}
	            \State{currc.unlock();      predc.unlock();}
	            \State{return false;}
	            
	       \EndIf
 \EndProcedure
        \algstore{createoldvcc}
	\end{algorithmic}
\end{algorithm}	   
\vspace{-0.8cm}
\begin{algorithm}[H]
\captionsetup{font=footnotesize}
	\caption{It takes input as \texttt{key} \& \ehead of a vertex, returns the exact location of key in the edge list.}
	\label{algo:locateedge}
	\begin{algorithmic}[1]
	\footnotesize
	    \algrestore{createoldvcc}
		\Procedure{\loce($key \downarrow, \ehead\downarrow, prede \uparrow, curre \uparrow$)}{}
	   \State{prede $\gets$ \ehead;}
	   \State{curre $\gets$ prede.enext;}
	   \While{(curre.val $<$ key )}
    	   \State{prede $\gets$ curre;}
    	   \State{curre $\gets$ curre.enext;}\label{lin:loce:lp}
       \EndWhile
       \EndProcedure
        \algstore{locateedge}
	\end{algorithmic}
\end{algorithm}
\vspace{-0.8cm}

\begin{algorithm}[H]
\captionsetup{font=footnotesize}
	\caption{It inserts the old vertex to SCC.}
	\label{algo:addvncc}
	\begin{algorithmic}[1]
	\footnotesize
	    \algrestore{locateedge}
		\Procedure{\addvncc($newcc\downarrow, predv \downarrow, currv\downarrow$);}{}
	      \State{flag1 $\gets$ \locv($currv.val \downarrow, newcc \uparrow, predv1 \uparrow, currv1\uparrow$);} //Algorithm \ref{algo:locatev}
	       \State{predv1.lock();        currv1.lock();}
	       \State{predv.lock();         currv.lock();}
	       \If{(flag1 = true $\wedge$ \valv(predv1$\downarrow$, currv1$\downarrow$) $\wedge$ \valv(predv$\downarrow$, currv$\downarrow$))}
	            \State{predv.vnext $\gets$ currv.vnext;} //detach the link
	           \State{currv.vnext $\gets$ currv1;} // logical insertion
	            \State{predv1.vnext $\gets$ currv;} \label{lin:oldv:newcc} // physical insertion
	            \State{currv.unlock(); predv.unlock();}
	       \State{currv1.unlock(); predv1.unlock();}
    	   \Else
    	   \State{currv.unlock();  predv.unlock();}
	       \State{currv1.unlock();predv1.unlock();}
	       \EndIf
 \EndProcedure
        \algstore{addvscc}
	\end{algorithmic}
\end{algorithm}

\vspace{-0.8cm}
\begin{algorithm}[H]
\captionsetup{font=footnotesize}
	\caption{It takes input as \emph{key} \& \ehead of a vertex. If key is not present, adds the edge node in the \ehead list, else returns false.}
	\label{algo:addedgenode}
	\begin{algorithmic}[1]
	\footnotesize
	    \algrestore{addvscc}
		\Procedure{$bool$ \addenode($key \downarrow, \ehead \downarrow$)}{}
	  \State{\loce($key \downarrow, \ehead \downarrow, prede \uparrow, curre \uparrow$);} //Algorithm \ref{algo:locateedge}
	  \State{newe $\gets$ \createe(key)}
	   \State{prede.lock(); curre.lock();}
	   \If{(curre.val $\neq$ key $\wedge$ \vale(prede$\downarrow$, curre$\downarrow$))}
    	   \State{newe.enext $\gets$ curre;} // logical insertion
    	   \State{prede.enext $\gets$ newe;} // physical insertion
    	   \State{curre.unlock(); prede.unlock();}
    	   \State{return true;}
    	\Else   
    	   \State{curre.unlock(); prede.unlock();}
    	   \State{return false;}
       \EndIf
       \EndProcedure
        \algstore{addedge}
	\end{algorithmic}
\end{algorithm}
\vspace{-0.8cm}
\begin{algorithm}[H]
\captionsetup{font=footnotesize}
	\caption{It takes input as \emph{key} \& \ehead of a vertex. If key is  present, removes the edge node from the \ehead list, else returns false.}
	\label{algo:remedgenode}
	\begin{algorithmic}[1]
	\footnotesize
	    \algrestore{addedge}
		\Procedure{$bool$ \remenode($key \downarrow, \ehead \downarrow$)}{}
	  \State{\loce($key \downarrow, \ehead \downarrow, prede \uparrow, curre \uparrow$);}
	   \State{prede.lock(); curre.lock();}
	   \If{(curre.val $=$ key $\wedge$ \vale(prede$\downarrow$, curre$\downarrow$))}
    	   \State{curre.marked $\gets$ true;} // logical deletion
    	   \State{prede.enext $\gets$ curre.enext;} // physical deletion
    	   \State{curre.unlock(); prede.unlock();}
    	   \State{return true;}
    	\Else   
    	   \State{curre.unlock(); prede.unlock();}
    	   \State{return false;}
       \EndIf
       \EndProcedure
        \algstore{remedge}
	\end{algorithmic}
\end{algorithm}

\noindent
After successful deletion, we check if any changes to SCC-Graph. For that we check the \texttt{ccno} of both the vertices(\lineref{reme-checkccno}). If the \texttt{ccnos} of both vertices are unequal, the SCC will not be affected, as edge $(u, v)$ added between two SCCs. If the \texttt{ccnos} are equal, then there may be breaking of that single SCC to multiple SCCs. For that we use the modified version of Kosaraju's algorithm for find the all SCCs in that old SCC using forward and backward depth first search(DFS) algorithm, for that we invoke the \emph{\findsccreme()} method in the \lineref{reme-findscc}.

In the Algorithm \ref{algo:findsccreme} we define to find all new SCCs after removal of the edge $(u, v)$. We only process the vertices and the edges on that SCC. For each iteration we use DFS algorithm twice, one for forward DFS invoked by \emph{\dfsf}(Algorithm \ref{alg:dfsfw} in the \lineref{findscc:dfsf}), and other invoked by \emph{\dfsb}(Algorithm \ref{alg:dfsbw}) in the \lineref{findscc:dfsb}. The \dfsf and \dfsb method locally processed the out-going and incoming edges respectively. In each iteration a new SCC is created(\emph{\createoldvcc()}) with old vertex(with its edges), say \emph{newcc} and inserted at the beginning of the SCC-Graph, and then detached from the old SCC. For the next subsequent vertices belongs to newcc's id are added by the \emph{\addvncc()} method to \texttt{newcc}. Like \adde, anytime we insert or detach a vertex from the vertex list we check the validation, because some threads concurrently may add or delete to the predecessor or successor of that vertex node. In the Fig \ref{fig:sccreme}, we have shown an example after deletion of the edge $(8, 7)$ in the Fig \ref{fig:sccgraph}(a), how a single SCC is breakdown to two new SCCs.\\
\noindent For removing a vertex we invoke the \emph{\remv} method and then use the limited version of Kosaraju's algorithm to resort the affected SCC, iff it violates the SCC-Graph. The mythology be similar as \reme, the pseudo code is shown in the Algorithm \ref{alg:remv}.

\vspace{-0.3cm}
\subsection{Check Community}
\vspace{-0.1cm}
The \emph{\checkscc(id1, id2)}, \emph{\blongsto(id)}, etc. are used for different applications. For any online social networking it takes ids of two person and checks by invoking \checkscc method whether these two persons are belong to same community, if they are, either system admin or one of the person can send friendship suggestion or request to other person. Similarly, \blongsto method reads a person id and tells which community he/she is belongs to, based on that it can do some activities in the community. Like this there are a huge number of SCC application in the dynamic graph and the requirement of efficient graph algorithms as well. The \checkscc and \blongsto methods are shown in the Algorithm \ref{alg:checkscc} and \ref{alg:blongsto} respectively.   
\vspace{-0.3cm}
\subsection{Memory management}
\vspace{-0.2cm}
Our proposed algorithm depends on a explicit garbage collector(GC) for better memory management. We defined a separate GC method which is invoked by an independent thread in regular intervals of time. Our GC method is similar to Michael’s Hazard Pointers technique\cite{Michael:2004} although it was designed for lock-free objects and we reclaim all three types of node. This GC thread does not affect the execution time. 

\begin{algorithm}[H]
\captionsetup{font=footnotesize}
	\caption{If the key is not present earlier return false, else creates a new SCC, adds a new vertex to it, returns true. Inserts the new SCC at the beginning of the SCC-graph, just after the \cchead.}
	\label{algo:addvcc}
	\begin{algorithmic}[1]
	\footnotesize
	    \algrestore{remedge}
		\Procedure{ $bool$ \addvcc($key \downarrow$ )}{}
	   \State{flag $\gets$ \loccc($key \downarrow, predc \uparrow, currc \uparrow, predv \uparrow, currv\uparrow$);} //Algorithm \ref{algo:locatescc}
	   \If{(flag = true)}
	        \State{return false;}
	   \Else
    	   \State{newcc $\gets$ \createcc(key);}
    	   \State{predc $\gets$ \cchead;}
    	   \State{currc $\gets$ predc.next;}
    	   \State{predc.lock(); currc.lock();}
	        \If{(\valc(predc$\downarrow$, currc$\downarrow$))}
	            \State{newcc.next $\gets$ currc;} // logical insertion
	            \State{predc.next $\gets$ newrc;} // physical insertion
	            \State{currc.unlock(); predc.unlock();}
	            \State{ccCount $\gets$ ccCount + 1;}
	            \State{return true;}
	        \Else
	            \State{currc.unlock(); predc.unlock();}
	            \State{return false;}
	        \EndIf
	   \EndIf
        \EndProcedure
        \algstore{addvscc}
	\end{algorithmic}
\end{algorithm}
\vspace{-0.8cm}
\begin{algorithm}[H]
\captionsetup{font=footnotesize}
	\caption{DFS of forward traversal. Only process the outgoing edges}
	\label{alg:dfsfw}
	\begin{algorithmic}[1]
	\footnotesize
	    \algrestore{addvscc}
		\Procedure{\dfsf($slHead \downarrow, sl\_edge\downarrow, num\_cc\downarrow, cc\downarrow, SUCC\downarrow \uparrow$)}{}
	   \State{SUCC[sl\_edge->val] $\gets$ num\_cc;}
	   \For{it $\gets$ sl\_edge.enext.enext to it.next $\neq$ NULL}
	   \If{(it.val $>$ 0)} // checks for outgoing edges
	    \State{flag $\gets$ \loccc($it.val \downarrow, predc \uparrow, currc \uparrow, predv \uparrow, currv\uparrow$);} //Algorithm \ref{algo:locatescc}
	   \If{(SUCC[it.val] = cc $\wedge$ flag = true)}
	   \State{\dfsf($slHead \downarrow, sl\_edge\downarrow, num\_cc\downarrow, cc\downarrow, SUCC\downarrow \uparrow$);} //Algorithm \ref{alg:dfsfw}
	   \EndIf
	   \EndIf
	   \EndFor
	   \EndProcedure
        \algstore{dfsfw}
	\end{algorithmic}
\end{algorithm}
\vspace{-0.8cm}
\begin{algorithm}[H]
\captionsetup{font=footnotesize}
	\caption{DFS of the backward traversal. Only process the incoming edges}
	\label{alg:dfsbw}
	\begin{algorithmic}[1]
	\footnotesize
	    \algrestore{dfsfw}
		\Procedure{\dfsb($slHead \downarrow, sl\_edge\downarrow, num\_cc\downarrow, cc\downarrow, PREC\downarrow \uparrow$)}{}
	   \State{PREC[sl\_edge->val] $\gets$ num\_cc;}
	   \For{it $\gets$ sl\_edge.enext.enext to it.next $\neq$ NULL}
	   \If{(it.val $<$ 0)}// checks for incoming edges
	    \State{flag $\gets$ \loccc($\text{(-1)}\ast it.val \downarrow, predc \uparrow, currc \uparrow, predv \uparrow, currv\uparrow$);} //Algorithm \ref{algo:locatescc}
	   \If{(PREC[(-1)$\ast$ it.val] = cc $\wedge$ flag = true)}
	   \State{\dfsb($slHead \downarrow, sl\_edge\downarrow, num\_cc\downarrow, cc\downarrow, PREC\downarrow \uparrow$);} //Algorithm \ref{alg:dfsbw}
	   \EndIf
	   \EndIf
	   \EndFor
	   \EndProcedure
        \algstore{dfsbw}
	\end{algorithmic}
\end{algorithm}
    \vspace{-0.8cm}
\begin{algorithm}[H]
\captionsetup{font=footnotesize}
	\caption{It adds all affected SCCs to a SCC after \adde.}
	\label{algo:findsccadde}
	\begin{algorithmic}[1]
	\footnotesize
	    \algrestore{dfsbw}
		\Procedure{$bool$ \findsccadde($scc1 \downarrow, scc2\downarrow, n\downarrow$)}{}
	   \State{Stk $\gets$ new slist\_t[n]; visited $\gets$ new bool[n];}
	   \State{Root $\gets$ new long[n]; Comp $\gets$ new long[n];}
	   \For{i $\gets$ 0 to n}
	   \State{visited[i] $\gets$ false; Root[i] $\gets$ $+\infty$; Comp[i] $\gets$ -1;}
	   \EndFor
	   \State{currv $\gets$ scc2.vnext;}
	   \State{\mergescc($currv\downarrow$, Stk$\downarrow$, Root$\downarrow$, Comp$\downarrow$, visited$\downarrow$);} // Algorithm \ref{algo:mergescc}
	   \EndProcedure
       \algstore{findsccadde}
	\end{algorithmic}
\end{algorithm}   
\vspace{-0.8cm}
\begin{algorithm}[H]
\captionsetup{font=footnotesize}
	\caption{Finds all SCCs after \reme. Iterate only affected vertices and edges. The logic is based on modified version of Kosaraju's algorithm}
	\label{algo:findsccreme}
	\begin{algorithmic}[1]
	\footnotesize
	    \algrestore{findsccadde}
		\Procedure{$bool$ \findsccreme($slHead \downarrow, cc\downarrow, n\downarrow$)}{}
	   \State{num\_cc $\gets$ ccid;}
	   \State{SUCC $\gets$ new long[n];}
	    \State{PREC $\gets$ new long[n];}
	   \For{i $\gets$ 0 to n}
	   \State{SUCC[i] $\gets$ cc;}
	   \State{PREC[i] $\gets$ cc;}
	   \EndFor
	   \For{it $\gets$ slHead.vnext to it.vnext $\neq$ NULL}
	   \If{(SUCC[it.val] $=$ cc)}
	   \State{\dfsf($slHead \downarrow, it\downarrow, num\_cc\downarrow, cc\downarrow, SUCC\downarrow \uparrow$);} \label{lin:findscc:dfsf} // Algorithm \ref{alg:dfsfw}
	   \State{\dfsb($slHead \downarrow, it\downarrow, num\_cc\downarrow, cc\downarrow, PREC\downarrow \uparrow$);}\label{lin:findscc:dfsb} // Algorithm \ref{alg:dfsbw}
	   \State{bool st $\gets$ true;} 
	   \State{cclist\_t newcc;}
	   \For{(j $\gets$ 1 to n)}
	   \If{(\loccc($key \downarrow$, $predc \uparrow$, $currc \uparrow$,$predv \uparrow$, $currv\uparrow$))} //Algorithm \ref{algo:locatescc}
	   \If{(SUCC[j] $\neq$ PREC[j])}
	     \State{SUCC[j] $\gets$ cc;} 
	     \State{PREC[j] $\gets$ cc;}
	   \Else 
	     \If{(SUCC[j] = num\_cc $\wedge$ PREC[j] = num\_cc))}
	       \If{(st = true)} // for first vertex
	       \State{newcc $\gets$ \createoldvcc($predv \downarrow$, $currv\downarrow$);}// Algorithm \ref{algo:createoldvcc}
	       \State{st $\gets$ false;} 
	       \Else // for rest of vertices with same num\_cc
	        \State{\addvncc($newcc\downarrow$, $predv \downarrow$, $currv\downarrow$);} // Algorithm \ref{algo:addvncc}
	     \EndIf
	   \EndIf
	   \EndIf
	   \EndIf
	   \EndFor
	   \State{num\_cc $\gets$ num\_cc +1;} 
	   \State{ccid $\gets$ num\_cc;}
	   \EndIf
	   
	   \EndFor
	   \EndProcedure
        \algstore{findsccreme}
	\end{algorithmic}
\end{algorithm}

\vspace{-0.8cm}
\begin{algorithm}[H]
\captionsetup{font=footnotesize}
	\caption{Finds all SCCs after \adde. Iterate only the vertices and edges which are affected and the logic is based on limited version of Tarjan's algorithm.}
	\label{algo:mergescc}
	\begin{algorithmic}[1]
	\footnotesize
	    \algrestore{findsccreme}
		\Procedure{\mergescc($currv\downarrow, Stk\downarrow, Root\downarrow, Comp\downarrow, visited\downarrow$)}{}
	   \State{visited[currv.val] $\gets$ true;}
	   \State{Root[currv.val] $\gets$ currv.val;}
	   \State{Comp[currv.val] $\gets$ -1;}
	   \State{Stk.push(currv)}
	   \For{it $\gets$ currv.enext.enext to it.enext $\neq$ NULL}
	   \If{(it.val $>$ 0)}
	   \If{(visited[it.val] = false)}
	   \If{(\loccc($it.val \downarrow, predc \uparrow, currc \uparrow, predv \uparrow, currv\uparrow$))}  //Algorithm \ref{algo:locatescc}
	   \State{\mergescc($currv\downarrow, Stk\downarrow, Root\downarrow, Comp\downarrow, visited\downarrow$);}  //Algorithm \ref{algo:mergescc}
	   \EndIf
	   \EndIf
	   \If{(Comp[it.val] = -1)}
	    \State{Root[currv.val] $\gets$ Root[currv.val] $<$ Root[it.val] $?$ Root[currv.val] $\colon$ Root[it.val]}
	   \EndIf
	   \EndIf
	   \EndFor
	   
	   \If{(Root[currv.val] = currv.val))}
	    \State{newcc $\gets$ \createoldvcc($predv \downarrow, currv\downarrow$);}  \label{lin:mergscc:createoldvcc} //Algorithm \ref{algo:createoldvcc}
	   \State{ccCount $\gets$ ccCount + 1;} // atomic increment(FAA(1))
	   \Repeat
	   \State{w $\gets$ Stk.pop();}
	   \State{\addvncc($newcc\downarrow,  w\downarrow$);}\label{lin:mergscc:addvncc}  //Algorithm \ref{algo:addvncc}
	   \Until{(w.val $\neq$ currv.val)} 
	   \EndIf
	   \EndProcedure
        \algstore{mergescc}
	\end{algorithmic}
\end{algorithm}

\vspace{-0.8cm}

\begin{algorithm}[H]
\captionsetup{font=footnotesize}
	\caption{Adds both incoming and outgoing edges to the edge list of vertex $key_1$ ,if it is not present earlier and then update the affected SCCs.}
	\label{alg:adde}
	\begin{algorithmic}[1]
	\algrestore{mergescc}
	\footnotesize
		\Procedure{$bool$ \adde ($key_1\downarrow$,$key_2\downarrow$)}{}
		 \State{flag1 $\gets$ \loccc($key_1 \downarrow, predc1 \uparrow, currc1 \uparrow, predv1 \uparrow, currv1\uparrow$);} \label{lin:adde-1ftloc}//Algorithm \ref{algo:locatescc}
		 \State{flag2 $\gets$ \loccc($key_2 \downarrow, predc2 \uparrow, currc2 \uparrow, predv2 \uparrow, currv2\uparrow$);} \label{lin:adde-2ftloc}//Algorithm \ref{algo:locatescc}
		\If {( flag1 = false $\vee$  flag2 = false)}
		\State {return false; }
        \EndIf
        \State{flag1 $\gets$ \loccc($key_1 \downarrow, predc1 \uparrow, currc1 \uparrow, predv1 \uparrow, currv1\uparrow$);} \label{lin:adde-1ftloca}//Algorithm \ref{algo:locatescc}
		\If {($flag1 = false $)}
      	 \State {return false; }
        \EndIf \label{lin:adde-endifloc}
        \State{flag $\gets$ \addenode(curr1.enext, $key_2$) $\wedge$ \addenode(curr2.enext, $(-1)*key_1$);} \label{lin:adde-addeuv} //Algorithm \ref{algo:addedgenode}
        \If{(flag = true)}
         \If{(currc1.ccno =  currc2.ccno)}
            \label{lin:adde-checkccno}
         \State{return true;}
         \Else
         \If{(!isReachable(currc2,currc1))} \label{lin:adde-reachable} // Checks reachable path from currc2 to currc1
          \State{ return true;}
         \Else
          \If{(\findsccadde($currc2 \downarrow, currc1\downarrow, n\downarrow$))}\label{lin:adde-findscc} //Algorithm \ref{algo:findsccadde}
          \State{return true;}
          \Else
          \State{return false;}
         \EndIf
        \EndIf
        \EndIf
        \EndIf
        
        \EndProcedure
		\algstore{adde}
	\end{algorithmic}
\end{algorithm}
\vspace{-0.8cm}

\begin{algorithm}[H]
\captionsetup{font=footnotesize}
	\caption{Removes both incoming and outgoing edges from the edge list of vertex $key_1$,if it is present and then update the affected SCCs.}
	\label{alg:reme}
	\begin{algorithmic}[1]
	\algrestore{adde}
	\footnotesize
		\Procedure{$bool$ \reme ($key1\downarrow$,$key2\downarrow$)}{}
		 \State{flag1 $\gets$ \loccc($key1 \downarrow, predc1 \uparrow, currc1 \uparrow, predv1 \uparrow, currv1\uparrow$);} \label{lin:reme-1ftloc} //Algorithm \ref{algo:locatescc}
		 \State{flag2 $\gets$ \loccc($key2 \downarrow, predc2 \uparrow, currc2 \uparrow, predv2 \uparrow, currv2\uparrow$);}\label{lin:reme-2ftloc} //Algorithm \ref{algo:locatescc}
		\If {( flag1 = false $\vee$  flag2 = false)}
		\State {return false; }
        \EndIf
        \State{flag1 $\gets$ \loccc($key1 \downarrow, predc1 \uparrow, currc1 \uparrow, predv1 \uparrow, currv1\uparrow$);}\label{lin:reme-1ftloca} //Algorithm \ref{algo:locatescc}
		\If {($flag1 = false $)}
      	 \State {return false; }
        \EndIf \label{lin:reme-endifloc}
        
        \State{flag $\gets$ \remenode(curr1.enext, $key2$) $\wedge$ \remenode(curr2.enext, $(-1)*key1$);} \label{lin:reme-remeuv} //Algorithm \ref{algo:remedgenode}
        \If{(flag = true)}
         \If{(currc1.ccno $\neq$  currc2.ccno)}\label{lin:reme-checkccno}
         \State{return true;}
         \Else
         \If{(\findsccreme($currc1 \downarrow, n\downarrow$))}\label{lin:reme-findscc} //Algorithm \ref{algo:findsccreme}
          \State{return true;}
          \Else
          \State{return false;}
         \EndIf
        \EndIf
        \EndIf
        \EndProcedure
		\algstore{reme}
	\end{algorithmic}
\end{algorithm}

\vspace{-0.8cm}

\begin{algorithm}[H]
\captionsetup{font=footnotesize}
	\caption{Removes the vertex along its incoming and outgoing edges from SCC-Graph,if it is present else returns false}
	\label{alg:remvn}
	\begin{algorithmic}[1]
	\algrestore{reme}
	\footnotesize
		\Procedure{$bool$ \remvnode ($key\downarrow$, $cc\uparrow$)}{}
		 \State{flag $\gets$ \loccc($key \downarrow, predc \uparrow, currc \uparrow, predv \uparrow, currv\uparrow$);} \label{lin:remvnode} //Algorithm \ref{algo:locatescc}
		 \If {($flag = false $)}
      	 \State {return false; }
      	 \Else
      	  \State{predv.lock(); currv.lock();}
	   \If{\vale(predv$\downarrow$, currv$\downarrow$))}
    	   \State{currv.marked $\gets$ true;} // logical deletion
    	   \State{predv.enext $\gets$ currv.enext;} // physical deletion
    	   \State{cc $\gets$ currc;}
    	   \State{currv.unlock(); predv.unlock();}
    	   \State{return true;}
    	\Else   
    	   \State{currv.unlock(); predv.unlock();}
    	   \State{return false;}
        \EndIf

        \EndIf
        \EndProcedure
		\algstore{remvn}
	\end{algorithmic}
\end{algorithm}

\vspace{-0.8cm}

\begin{algorithm}[H]
\captionsetup{font=footnotesize}
	\caption{update the SCC-Graph after successful \remvnode}
	\label{alg:remv}
	\begin{algorithmic}[1]
	\algrestore{remvn}
	\footnotesize
		\Procedure{$bool$ \remv ($key\downarrow$)}{}
		 \State{flag $\gets$ \remvnode($key \downarrow$, $cc \uparrow$);} \label{lin:remvn} //Algorithm \ref{alg:remvn}
		 \If {($flag = false $)}
      	 \State {return false; }
      	 \Else
      	  \State{return \findsccreme($cc \downarrow, cc.ccno)\downarrow$))}\label{lin:remv-findscc} //Algorithm \ref{algo:findsccreme};
        \EndIf
        \EndProcedure
		\algstore{remv}
	\end{algorithmic}
\end{algorithm}

\vspace{-0.8cm}

\begin{algorithm}[H]
\captionsetup{font=footnotesize}
	\caption{It inserts a newcc with new vertex to SCC-Graph at the CCHead position}
	\label{alg:addcc}
	\begin{algorithmic}[1]
	\algrestore{remv}
	\footnotesize
		\Procedure{$bool$ \addcc ($key\downarrow$)}{}
		 \State{predc $\gets$ \cchead;}
    	   \State{currc $\gets$ predc.next;}
    	   \State{newcc $\gets$ \createc(key$\downarrow$); } //Algorithm \ref{alg:createc};
    	   \State{predc.lock();     currc.lock();}
    	    \State{predv.lock();     currv.lock();}
    	     \If{(\valc(predc$\downarrow$, currc$\downarrow$) $\wedge$ \valv(predv$\downarrow$, currv$\downarrow$))}    
	           \State{predv.vnext $\gets$ currv.vnext}
	            \State{newcc.next $\gets$ currc;} // logical insertion
	            \State{predc.next $\gets$ newcc;} // physical insertion 
	            
	            \State{currv.unlock();      predv.unlock();}
	            \State{currc.unlock();      predc.unlock();}
	            \State{ccCount $\gets$ ccCount + 1;} // atomic increment (FAA(1))
	            \State{return true;}
	        \Else
	            \State{currv.unlock();      predv.unlock();}
	            \State{currc.unlock();      predc.unlock();}
	            \State{return false;}
	            
	       \EndIf
        \EndProcedure
		\algstore{addcc}
	\end{algorithmic}
\end{algorithm}
\vspace{-0.8cm}

\begin{algorithm}[H]
\captionsetup{font=footnotesize}
	\caption{Adds a new vertex to the SCC-Graph}
	\label{alg:addv}
	\begin{algorithmic}[1]
	\algrestore{addcc}
	\footnotesize
		\Procedure{$bool$ \addv ($key\downarrow$)}{}
		 \State{ return \addcc ($key\downarrow$);} // Algorithm \ref{alg:addcc}
        \EndProcedure
		\algstore{addv}
	\end{algorithmic}
\end{algorithm}

\begin{algorithm}[H]
\captionsetup{font=footnotesize}
	\caption{Initialize a new SCC}
	\label{alg:createc}
	\begin{algorithmic}[1]
	\algrestore{addv}
	\footnotesize
		\Procedure{$cclist\_t$ \createc ($key\downarrow$)}{}
		 \State{ VHead.val $\gets$ INT\_MIN;}
		 \State{ VHead.vnext $\gets$ NULL;}
		 \State{ VHead.enext $\gets$ NULL;}
		 \State{ VHead.marked $\gets$ false;}
		 \State{ VTail.val $\gets$ INT\_MAX;}
		 \State{ VTail.vnext $\gets$ NULL;}
		 \State{ VTail.enext $\gets$ NULL;}
		 \State{ VTail.marked $\gets$ false;}
		 \State{ EHead.val $\gets$ INT\_MIN;}
		 \State{ EHead.vnext $\gets$ NULL;}
		 \State{ EHead.enext $\gets$ NULL;}
		 \State{ EHead.marked $\gets$ false;}
		 \State{ ETail.val $\gets$ INT\_MAX;}
		 \State{ ETail.vnext $\gets$ NULL;}
		 \State{ ETail.enext $\gets$ NULL;}
		 \State{ ETail.marked $\gets$ false;}
		 \State{ EHead.enext $\gets$ ETail;}
		 \State{ newv.val $\gets$ key;}
		 \State{ newv.vnext $\gets$ VTail;}
		 \State{ newv.enext $\gets$ EHead;}
		 \State{ newv.marked $\gets$ false;}
		 \State{ newcc.vnext $\gets$ newv;}
		 \State{ newcc.next $\gets$ NULL;}
		 \State{ newcc.ccno $\gets$ ccid;}
		 \State{ ccid $\gets$ ccid $+$ 1;}
		 \State{ return newcc;}
        \EndProcedure
		\algstore{createc}
	\end{algorithmic}
\end{algorithm}

\begin{algorithm}[H]
\captionsetup{font=footnotesize}
	\caption{Removes all empty SCCs from SCC-Graph, i.e., SCC having empty vertex}
	\label{alg:remcc}
	\begin{algorithmic}[1]
	\algrestore{createc}
	\footnotesize
		\Procedure{$bool$ \remcc()}{}
		 \State{predc $\gets$ \cchead;}
    	   \State{currc $\gets$ predc.next;}
    	   \While{(currc $\neq$ CCTail)}\label{lin:remcc:while}
    	   \If{(currc.vnext.vnext.vnext = NULL)}
    	   \State{predc.lock();} 
    	   \State{currc.lock(); }
    	    \If{(\valc(predc$\downarrow$, currc$\downarrow$) )}    
	           \State{currc.marked $\gets$ true;} // logical deletion
	            \State{predc.next $\gets$ currc;} // physical deletion 
	            \State{currc.unlock();}      
	            \State{predc.unlock();}      
	            \State{ccCount $\gets$ ccCount - 1 ;} // atomic increment (FAA(-1))
	            \State{continue;}// goto \lineref{remcc:while};
	        \Else
	            \State{currc.unlock();}      
	            \State{predc.unlock();}  
	            \State{predc $\gets$ currc;} 
	            \State{currc $\gets$ currc.next;} 
	       \EndIf
	       \Else
	            \State{predc $\gets$ currc;} 
	            \State{currc $\gets$ currc.next;} 
    	   \EndIf
    	   \EndWhile
    	   
        \EndProcedure
		\algstore{remcc}
	\end{algorithmic}
\end{algorithm}
\vspace{-0.8cm}

\begin{algorithm}[H]
\captionsetup{font=footnotesize}
	\caption{Checks whether two ids are in the same strongly connected component at a given instance.}
	\label{alg:checkscc}
	\begin{algorithmic}[1]
	\algrestore{remcc}
	\footnotesize
		\Procedure{$bool$ \checkscc($key_1\downarrow$, $key_2\downarrow$)}{}
		\State{flag1 $\gets$ \loccc($key_1 \downarrow, predc1 \uparrow, currc1 \uparrow, predv1 \uparrow, currv1\uparrow$);} \label{lin:cone-1ftloc}//Algorithm \ref{algo:locatescc}
		 \State{flag2 $\gets$ \loccc($key_2 \downarrow, predc2 \uparrow, currc2 \uparrow, predv2 \uparrow, currv2\uparrow$);} \label{lin:cone-2ftloc}//Algorithm \ref{algo:locatescc}
		\If {( flag1 = false $\vee$  flag2 = false)}
		\State {return false; }
        \EndIf
        \State{flag1 $\gets$ \loccc($key_1 \downarrow, predc1 \uparrow, currc1 \uparrow, predv1 \uparrow, currv1\uparrow$);} \label{lin:cone-1ftloca}//Algorithm \ref{algo:locatescc}
		\If {($flag1 = false $)}
      	 \State {return false; }
      	 \Else
      	 \State{\loce($key_2 \downarrow, currv1.enext \downarrow, prede \uparrow, curre \uparrow$);} \label{lin:cone-loce}//Algorithm \ref{algo:locateedge}
      	 
		\If {($curre.val = key_2 \wedge curre.marked = false$)}
		\State{return true;}
		\Else
		\State{return false;}
        \EndIf \label{lin:cone-endifloc}
        \EndIf
        \EndProcedure
		\algstore{checkscc}
	\end{algorithmic}
\end{algorithm}
\vspace{-0.8cm}

\begin{algorithm}[H]
\captionsetup{font=footnotesize}
	\caption{Checks id belongs to which SCC, it returns the \emph{ccno} of an SCC,}
	\label{alg:blongsto}
	\begin{algorithmic}[1]
	\algrestore{checkscc}
	\footnotesize
		\Procedure{$bool$ \blongsto($key\downarrow$)}{}
		\State{flag $\gets$ \loccc($key \downarrow, predc \uparrow, currc \uparrow, predv \uparrow, currv\uparrow$);} \label{lin:conv-1ftloc}//Algorithm \ref{algo:locatescc}
		 \If {($flag = true \wedge currv.marked = false $)}
      	 \State {return true; }
      	
		\Else
		\State{return false;}
        
        \EndIf
        \EndProcedure
	\end{algorithmic}
\end{algorithm}
\vspace{-0.8cm}

\vspace{-0.3cm}
\section{The Correctness Proof}
\label{sec:correcteness-proof}
\vspace{-0.3cm}
We now describe how our proposed algorithm SMSCC is correct. A full and detail proof is incomplete in this paper and it is based on Timnat. et.al.'s, full paper \cite{TimnatBKP12}. We think the detail proof is very much important for concurrent \ds and algorithms as without that, it is very hard to understand the races. Any directed graph is represented as SCC-Graph and it is collection of three types of lists. First, the \emph{SCC-list}, each SCC is a node in the SCC-list. Secondly, \emph{Vertex-list}, each SCC has a vertex set, stoored in the vertex list and finally, \emph{Edge-list}, each vertex has its  adjacency edge list. The SCC-Graph is interfaced with node id or key value \texttt{val}, boolean \texttt{marked} filed and \texttt{next} field. At any instance of time a node is considered to be part of SCC-Graph, if it is unmarked. \\

\noindent
\textbf{Proof Methodology}
We define the abstract SCC-Graph which always holds two invariants. Once the invariant holds for a node, it remain true. The first invariant is that, the node(SCC or vertex or edge) can only physically change by pointer(\texttt{next} or \texttt{vnext} or \texttt{enext}) and the key value of the node never change after initialization. Second, once a node is marked, it remain to be marked and it's next pointer never change until GC. For proving the correctness we use the four stages of any node similar like Timnat. et.al.'s,  \cite{TimnatBKP12}. \emph{Logical remove}: changing the \texttt{marked} filed false to true. \emph{Physical remove}: delinking the node from the list. \emph{Logical insertion}: Connecting new node's pointer to the node list. \emph{Physical Insertion}: making new logical node to a physical node, i.e. actual insertion.  We prove our algorithm using mathematical induction. 
\begin{lemma}
\label{lem:linealizability-g}
The history $H$ generated by the interleaving of any of the methods of the SCC-Graph, is \lble. \\
\textnormal{Proof is incomplete}.
\end{lemma}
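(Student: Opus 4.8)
The plan is to run the standard linearizability argument for lazy, lock-based linked structures, in the style of Timnat et al.~\cite{TimnatBKP12}, adapted to the three nested list types of the SCC-Graph: the SCC-list, the per-SCC vertex-list, and the per-vertex edge-list. The first task is to prove, by induction on the steps of $H$, the structural invariants announced in the Proof Methodology: the \texttt{val} field is immutable after initialisation; once \texttt{marked} is set it stays set and the node's out-pointer is frozen until reclamation; every unmarked node is reachable from its sentinel (\cchead, the \vhead of its SCC, or the \ehead of its vertex); each vertex-list and each edge-list is strictly sorted by \texttt{val}; the unmarked CCnodes partition the unmarked vertex nodes; and every unmarked out-edge $v$ in $u$'s list has a matching unmarked in-edge $-u$ in $v$'s list. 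The inductive step is a line-by-line check that every pointer write in the algorithms is performed by a thread holding the relevant locks and only after the matching \valc, \valv, or \vale\ call succeeds; the splices inside \createoldvcc, \addvncc, \addenode, \remenode, \addcc\ and \remcc\ are precisely the lazy-set insert/delete pattern applied to the three list types, so the classical lazy-set reasoning (traverse lock-free, then lock and validate the neighbours before committing) carries over essentially verbatim.

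\noindent Next I would fix the abstract object $\abg = (V,E)$, where $V$ is the set of unmarked vertex nodes and $E$ the set of pairs $(u,v)$ such that $v$ occurs as an unmarked node in $u$'s out-edge list (the $-u$ in-edges being auxiliary for backward traversal, hence not part of $\abg$), and assign one linearization point per method: for \addv, the pointer write that links the new CCnode just after \cchead; for a failed \remv, the \loccc\ read witnessing absence of the key (or the read of $currv.\texttt{marked}$); for a successful \remv, the write that sets \texttt{marked} on $currv$ in \remvnode; for \adde\ returning \texttt{false}, the read that first sees a missing endpoint or an already-present edge; for a successful \adde, the physical insertion of $v$ into $u$'s out-list in the first \addenode\ call; for a successful \reme, the marking of $v$'s node in $u$'s out-list in the first \remenode\ call; and for \checkscc\ and \blongsto, the final read of $curre$ (resp.\ $currv$) together with its \texttt{marked} bit. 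Each such point lies inside the execution interval of its method, so the induced total order refines the real-time order of $H$; and, using the invariants of the first step, at each point the return value and the change to $(V,E)$ coincide with the sequential specification of the \texttt{SCC} class --- note that the $-u$ in-edge, the \texttt{ccCount} and \texttt{ccid} counters, and the entire SCC-rebuilding phase leave $V$ and $E$ untouched, so the graph component of $\abg$ changes atomically at the chosen point.

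\noindent The hard part is the SCC partition. After an edge update whose endpoints lie in the same component (for \reme) or in components newly joined into a cycle by the inserted edge (for \adde), \findsccadde\ performs a restricted run of Tarjan's algorithm seeded at $v$, and \findsccreme\ performs a restricted Kosaraju run confined to the old component using the forward and backward passes \dfsf\ and \dfsb; each then rebuilds the partition through a sequence of \createoldvcc\ and \addvncc\ steps that are individually atomic but not collectively atomic. I would first argue that these procedures are \emph{functionally} correct once restricted to the affected vertex set --- by correctness of the classical Tarjan and Kosaraju algorithms, together with the observation that any edge leaving the affected set enters a component the update cannot change, so ignoring it is harmless --- so that at quiescence the CCnode structure is exactly the SCC decomposition of the current $E$. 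The genuinely delicate point, and the reason the proof is still incomplete, is to show that the non-atomic rebuild window is invisible to concurrently running operations: \checkscc\ and \blongsto\ as coded inspect only \texttt{val}, \texttt{marked} and edge-list membership, never a half-updated \texttt{ccno}, so their responses remain explained by the linearization points above; making this precise --- or, alternatively, exhibiting a single atomic linearization point for the rebuild phase of \adde\ and \reme --- in the detailed style of \cite{TimnatBKP12} is the crux that remains. A concluding induction on $|H|$ then assembles the per-method arguments into a legal sequential history equivalent to $H$, which is exactly what \lbty\ of the SCC-Graph requires. \hfill$\Box$
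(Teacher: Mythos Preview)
The paper supplies no proof for this lemma at all --- the statement itself carries the annotation ``Proof is incomplete'' and nothing further is given.  What the paper does provide is the short Proof Methodology paragraph immediately preceding the lemma (the two invariants you also adopt) and, after the lemmas, a separate Linearization Points subsection.  Your plan is therefore far more detailed than anything in the paper; it is closer to what the paper would have to write to close the gap than to a competing argument.

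Where your plan and the paper's sketch do diverge is in the choice of \lp{s}.  The paper places the \lp\ of a successful \adde\ (absent a concurrent \remv) at the SCC-list splice inside \createoldvcc\ or \addvncc, i.e.\ at the moment a new CCnode is linked in or an old vertex is moved; you instead place it at the physical edge insertion inside the first \addenode\ call, and symmetrically for \reme.  Your choice is the more defensible one for the abstract object $(V,E)$ you define: the paper's candidate \lp\ simply does not exist when the new edge lies inside an existing SCC or when no cycle is closed (the method then returns before any \createoldvcc/\addvncc\ is reached), and in any case the component rebuild does not change $E$.  Your decomposition --- linearize at the edge write, treat the whole SCC-restructuring phase as auxiliary bookkeeping that leaves $(V,E)$ fixed --- is cleaner and is precisely what lets you isolate the residual difficulty.

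That residual difficulty, which you name honestly, is the same one the paper leaves open: the rebuild inside \findsccadde/\findsccreme\ is a sequence of individually locked splices, so while it runs the CCnode structure encodes neither the old nor the new partition.  Your observation that \checkscc\ and \blongsto\ as actually coded never read \texttt{ccno} --- \checkscc\ in fact tests edge-list membership rather than component identity --- is correct and is what rescues their linearizability under your \lp{s}.  But if a second thread enters its own rebuild concurrently, or if any future client reads \texttt{ccno}, the non-atomic window becomes visible and a single \lp\ for the update is no longer obvious.  That is the genuine gap, in both your plan and the paper.
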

\begin{lemma}
	The methods $\addv$, $\remv$, $\adde$ and $\reme$ are deadlock-free.
\end{lemma}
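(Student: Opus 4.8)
The plan is to reduce deadlock-freedom to the classical recipe: exhibit one fixed order $\prec$ on the lockable nodes of the SCC-Graph, show that every one of the four methods (and every auxiliary procedure they invoke while holding a lock) acquires its locks in a $\prec$-increasing sequence, and note that the graph is finite and every critical section is bounded and lock-free otherwise. Then a thread that has obtained all the locks it needs always terminates its critical section and releases them in finite time, so the only way the whole system could stall is a cycle in the waits-for graph; a consistent lock order forbids such a cycle, so in any fair infinite execution some method call completes, which is exactly the minimal-progress requirement of deadlock-freedom stated in \secref{System-Model-Preliminaries}. (The remaining escape hatch, a traversal in \loccc/\locv/\loce that never reaches a sentinel, can only happen because other threads keep committing insertions ahead of the cursor, i.e. because other calls complete, so it too is consistent with minimal progress; and the list invariants from \lemref{linealizability-g} rule out the traversal cycling.)

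For the order I would give each node a lexicographic rank $(\textsc{scc}, \textsc{vert}, \textsc{tier}, \textsc{edge})$: the $\textsc{scc}$ field of an SCC-node, or of the SCC containing a vertex/edge node, is its position in the SCC-list counted from \cchead (so \cchead is least, \cctail greatest, and since \createoldvcc, \addcc and \addvcc all splice a new SCC in immediately after \cchead, this position never changes once assigned); within one vertex-list nodes are ordered by the immutable \texttt{val}, within one edge-list likewise; and $\textsc{tier}$ places every SCC-node below the vertex-nodes of its SCC and every vertex-node below the edge-nodes on its list. The helpers \loccc, \locv, \loce take no locks and \valc, \valv, \vale only read, so they are irrelevant. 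The case analysis is then routine: \addenode and \remenode lock \texttt{prede} then \texttt{curre} with $\texttt{prede.val}<\texttt{curre.val}$ in one edge-list; \remvnode locks \texttt{predv} then \texttt{currv} in one vertex-list with $\texttt{predv.val}<\texttt{currv.val}$; \createoldvcc locks \cchead, then the first SCC, then \texttt{predv}, \texttt{currv} of the SCC being split, which is $\prec$-increasing because those two SCC-nodes have $\textsc{scc}$ rank no larger than that of the split SCC while \texttt{predv}, \texttt{currv} are vertex-nodes of it; \addvncc locks \texttt{predv1}, \texttt{currv1} inside the just-created \texttt{newcc} and then \texttt{predv}, \texttt{currv} inside the old SCC, which is $\prec$-increasing because \texttt{newcc} was inserted after the old SCC and so sits strictly closer to \cchead. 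Finally \adde, \reme, \remv, \addv call only these together with \mergescc, \findsccreme, \findsccadde, \dfsf, \dfsb, and the latter hold no lock across their recursive calls and use \createoldvcc/\addvncc only as self-contained leaves, so the set of locks a thread holds at any moment is always a $\prec$-chain.

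The step I expect to be the main obstacle is making this rank genuinely well defined through the merge and split operations: when \addvncc splices a vertex node out of the old SCC into \texttt{newcc}, its $\textsc{scc}$ component changes, so $\prec$ is not literally a fixed order on nodes, and a node removed by \remcc has no well-defined list position at all. I would handle this by weakening the requirement to exactly what the waits-for argument needs, namely that \emph{at the instant a thread requests a lock}, the requested node's current rank exceeds the ranks of all locks that thread currently holds — which the case analysis above verifies using the rank at request time — and by observing that a migrating node is modified only while locked by the migrating thread, which acquires no further lock afterwards, so the transient rank change cannot create a new wait. The other care points are to confirm that on a \valc/\valv/\vale failure the code releases its locks before retrying or returning (so no lock is carried backward across a retry), and that the ``first SCC'' in \createoldvcc is the node actually read into \texttt{currc} at lock time, not the node present after the pointer surgery.
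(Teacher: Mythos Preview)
Your proposal and the paper's proof share the same core idea---deadlock-freedom via a consistent total order on locks---but the paper's proof sketch is far thinner than yours: it simply observes, for each of the four methods separately, that a thread locks \texttt{pred} before \texttt{curr} and hence always locks the node with the smaller key first, and leaves it at that. In particular, the paper does not introduce any cross-list rank (your $(\textsc{scc},\textsc{vert},\textsc{tier},\textsc{edge})$ lexicographic order), does not analyse \createoldvcc or \addvncc where SCC-nodes and vertex-nodes from \emph{different} SCCs are locked together, and does not confront the rank-migration issue you flag as the ``main obstacle.'' What your approach buys is a genuinely global argument that covers the helper procedures the paper's sketch silently skips; what the paper's version buys is brevity, at the cost of leaving exactly those cross-list cases (and the validation-failure/retry release discipline) as unstated exercises. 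Your identification of the migration subtlety and the weakening to rank-at-request-time is a real refinement beyond anything in the paper.
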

\begin{proofsketch}
\noindent
We prove all the $\addv$, $\remv$, $\adde$ and $\reme$ methods(based on \cite{PeriSS16}) are deadlock-free by direct argument based of the acquiring lock on both the current and predecessor nodes.
\begin{enumerate}
    \item $\addv$: the $\addv(key)$ method is deadlock-free because a thread always acquires lock on the $\vnode$ with smaller keys first. Which means, if a thread say $T_1$ acquired a lock on a $\vnode(key)$, it never tries to acquire a lock on a $\vnode$ with key smaller than or equal to $\vnode(key)$. This is true because the $\addv$ method acquires lock on the predecessor $\vnode$ from the $LocateVertex$ method. 
    \item $\remv$: the $\remv(key)$ method is also deadlock-free, similar argument as $\addv$. 
    \item $\adde$: the $\adde(key_1, hey_2)$ method is deadlock-free because a thread always acquires lock on the $\enode$ with smaller keys first. Which means, if a thread say $T_1$ acquired a lock on a $\enode(key_2)$, it never tries to acquire a lock on a $\enode$ of the vertex $\vnode(key_1)$ with key smaller than or equal to $\enode(key_2)$. This is true because the $\adde$ method acquires lock on the predecessor edge nodes of the vertex $\vnode(key_1)$ from the $\loce$ method. 
    \item $\reme$: the $\reme(key_1, key_2)$ method is also deadlock-free, similar argument as $\adde$. 
\end{enumerate}

\end{proofsketch}
\begin{lemma}
	The methods $\checkscc$ and $\blongsto$ are wait-free.
\end{lemma}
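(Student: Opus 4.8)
The plan is to establish wait-freedom directly from the code, by showing that any invocation of $\checkscc$ or $\blongsto$ returns after a number of steps that is bounded independently of what the other threads do. First I would observe that neither method ever calls \texttt{lock()}: both are built solely from the read-only helpers \loccc (Algorithm~\ref{algo:locatescc}) and \loce (Algorithm~\ref{algo:locateedge}) together with plain reads of the \texttt{val} and \texttt{marked} fields. Hence no invocation can ever be suspended waiting on a lock held by another thread, and it remains only to bound the number of pointer-chasing iterations performed inside \loccc and \loce.

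The key structural fact, which follows from the two invariants stated in the Proof Methodology (a node's \texttt{val} never changes after initialization, and a marked node's forward pointers never change before GC), is that each vertex list and each edge list stays strictly sorted by \texttt{val} along its \texttt{vnext}, resp. \texttt{enext}, links, and that every such list ends at a sentinel of value \texttt{INT\_MAX} (\texttt{VTail}, \texttt{ETail}; see Algorithm~\ref{alg:createc}), while the SCC list is terminated by \texttt{CCTail}. Concretely, for every node $n$ reached during a traversal one has $n.\mathtt{enext}.\mathtt{val} > n.\mathtt{val}$ (and likewise for \texttt{vnext}), because every insertion splices the new node into its sorted position and every removal only re-links around the removed node without touching that node's own forward pointer. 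Consequently each iteration of the \texttt{while}/\texttt{for} loops over a vertex or edge list strictly increases the \texttt{val} of the current node, so such a loop exits as soon as this value reaches the target key or the tail.

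Next I would bound the number of increasing steps that can occur. Since vertex identifiers are unique and never reused (the paper states they are handed out by fetch-and-add) and no duplicate edge is ever created, only finitely many keys can ever appear strictly below a fixed target key $key$ in a given edge list or vertex list — they lie among the finitely many integers below $key$. Moreover every newly created SCC, and every vertex relocated during a merge (\createoldvcc, \addvncc called from \mergescc) or a split (invoked from \findsccreme), is spliced in immediately after \texttt{CCHead}, i.e. strictly behind any traversal that has already read \texttt{CCHead.next}; so no SCC node and no vertex node is ever inserted ahead of an in-progress traversal. Therefore a single call to \loccc visits at most the SCC nodes present when it read \texttt{CCHead.next} plus, across all their vertex lists, the finitely many vertices with \texttt{val}${}\le key$, and a single call to \loce visits at most the finitely many edge nodes with \texttt{val}${}\le key$ — both finite and unaffected by subsequent steps of other threads. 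Since $\checkscc$ performs a constant number of \loccc calls and one \loce call, and $\blongsto$ a single \loccc call, each method terminates after a bounded number of steps and takes no locks, so both are wait-free.

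The step I expect to be the main obstacle is the boundedness argument of the third paragraph: ruling out the scenario in which an arbitrarily slow reader is repeatedly overtaken by concurrent insertions of nodes whose keys lie between its current position and its target. Making this airtight requires the monotonicity of the identifier counters, the no-duplicates property for vertices and edges, and a careful verification that \emph{every} concurrent modification relevant to a running traversal — SCC creation, and vertex relocation during a merge or a split — takes effect at the head of the SCC list and hence strictly behind the traversal. These are precisely the invariants that the (still incomplete) linearizability argument of Lemma~\ref{lem:linealizability-g} is meant to supply, and once they are in hand the wait-freedom proof closes immediately.
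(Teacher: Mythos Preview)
Your argument follows the same core idea as the paper's proof: neither method takes locks, the vertex and edge lists are sorted, only finitely many keys lie below the target, and no node with a smaller-or-equal key is ever inserted ahead of an in-progress scan, so each pointer step strictly increases the current \texttt{val} and the loops terminate. In that sense the two proofs coincide.

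Where you go beyond the paper is in handling the outer loop of \loccc over the (unordered) SCC list. The paper's proof, which it flags as ``based on \cite{PeriSS16}'', is written as if $\blongsto$ and $\checkscc$ scan a single global vertex list from a single \texttt{VHead}; it never mentions the \texttt{CCHead}--\texttt{CCTail} traversal at all. Your observation that every new SCC (from \createoldvcc, \addvcc) is spliced in immediately after \texttt{CCHead}, hence strictly behind any traversal that has already read \texttt{CCHead.next}, is exactly what is needed to bound that outer loop, and the paper simply does not supply it. So your proof is not a different route so much as a completion of the same route: the monotone-key argument for the inner vertex/edge loops is identical, and you add the missing termination argument for the SCC layer. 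The ``main obstacle'' you identify is real, and the invariants you cite (monotone id generation, head-insertion of new SCCs, stable forward pointers of marked nodes) are the right ones to close it.
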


\begin{proof}
(Based on \cite{PeriSS16})The $\blongsto(key)$ method scans the vertex list of the graph starting from the $\vh$, ignoring whether $\vnode$ are marked or not. It returns a boolean $flag$ either $true$ or $false$ depending on $\vnode(key)$ greater than or equal to the sought-after key. If the desired $\vnode$ is unmarked, it simply returns $true$ and this is correct because the vertex list is sorted. On the other hand, it returns $false$ if $\vnode(key)$ is not present or has been marked. This $\blongsto$ method is wait-free, because there are only a finite number of vertex keys that are smaller than the one being searched for. By the observation of the code, a new $\vnode$ with lower or equal keys is never added ahead of it, hence they are reachable from $\vh$ even if vertex nodes are logically removed from the vertex list. Therefore, each time the $\blongsto$ moves to a new vertex node, whose key value is larger key than the previous one. This can happen only finitely many times, which says the traversal of $\blongsto$ method is wait-free.

Similarly, the $\checkscc(key_1, key_2)$ method first scans the vertex list of the graph starting from the $\vh$, ignoring whether vertex nodes are marked or not. It returns a boolean $flag$ either $true$ or $false$ depending on $\enode(key_2)$ greater than or equal to the sought-after key in the edge list of the vertex $\vnode(key_1)$. If the desired $\enode$ is unmarked, it simply returns $true$ and this is correct because the vertex list is sorted as well as the edge list of the vertex $\vnode(key_1)$ is also sorted. On the other hand it returns $false$ if either $\vnode(key_1)$ or $\vnode(key_2)$ is not present or has been marked in the vertex list or $\enode(key_2)$ is not present or has been marked in the edge list of the vertex $\vnode(key_1)$. This $\checkscc$ method is wait-free, because there are only a finite number of vertex keys that are smaller than the one being searched for as well as a finite number of edge keys that are smaller than the one being searched for in edge list of any vertex. By observation of the code, a new $\enode$ with lower or equal keys is never added ahead of $\enode(key_2)$ in the edge list of the vertex $\vnode(key_1)$, hence they are reachable from $\vh$ even if vertex nodes or edge nodes of $\vnode(key_1)$ are logically removed from the vertex list. Therefore, each time the  $\cone$ moves to a new edge node, whose key value is larger key than the previous one. This can happen only finitely many times, which says the traversal of $\checkscc$ method is wait-free. 

\end{proof}
\vspace{-0.6cm}
\subsection{Linearization Points}
\vspace{-0.1cm}
In this section we identify the linearization point(LP) of our proposed methods and it is similar as \cite{PeriSS16}. Before identifying the LP, we first consider \loccc, as it is used by most of the methods. It returns true if key value is present along with pair of SCC pointers(\texttt{predc} \& \texttt{currc}) and pair of vertex pointers (\texttt{predv} \& \texttt{currv}). For successful \emph{\loccc} return the LP be \lineref{loccc:key} where key is found and for unsuccessful the LP be \lineref{loccc:while:lastread} last read of \texttt{currc.next}. \\
\noindent
The LP of a successful \emph{\adde} with no successful concurrent \emph{\remv}, is either in the \lineref{createoldv:newcc} when a new SCC is created with the first old vertex, it is the physical insertion \texttt{predc.next $\gets$ newcc} at the \texttt{CCHead} position. Or the LP be in the \lineref{oldv:newcc} \texttt{predv1.vnext $\gets$ currv} physical insertion of old vertex to \texttt{newcc}. If there is a successful concurrent removal of either one of the vertex or both, we linearized just before the LP of the first successful concurrent \remv. For unsuccessful \emph{\adde}, the LP is inside \emph{\loccc} method where either of the vertex is not found in the \lineref{loccc:while:lastread} the last read of \texttt{currc.next} or inside the \emph{\loce} method in the \lineref{loce:lp}, the last read of \emph{curre.enext} inside the while loop, if the edge node is present.\\ 

\noindent
Similarly, the LP of a successful \emph{\reme} with no successful concurrent \emph{\remv} is same as successful \emph{\adde} with no successful concurrent \emph{\remv}. If there is a successful concurrent removal of either one of the vertex or both, we linearized just before the LP of the first successful concurrent \remv. For unsuccessful \emph{\adde}, the LP is inside \emph{\loccc} method where either of the vertex key is not found in the \lineref{loccc:while:lastread} the last read of \texttt{currc.next}, or inside the \emph{\loce} method in the \lineref{loce:lp}, the last read of \emph{curre.enext} inside the while loop, where the edge node is not present. 

\vspace{-0.3cm}
\section{Performance Analysis}
\label{sec:perf-analysis}
\vspace{-0.3cm}
In this section, we evaluate the performance of our SMSCC algorithm. The source code available at https://github.com/Mukti0123/SMSCC. It contains both fully and partial dynamic SCC with \& without deletion of incoming edges(DIE) and some applications, such as community detection. We  compare throughput with sequential and coarse-grain. \\

The methods are evaluated on a dual-socket, 10 cores per socket, Intel Xeon (R) CPU E5-2630 v4 running at 2.20 GHz frequency. Each core supports 2 hardware threads. Every core's L1 has 64k, L2 has 256k cache memory are private to that core; L3 cache (25MB) is shared across all cores of a processors.
All the codes are compiled using the GCC C/C++ compiler (version 5.4.0) with -O3 optimization and Posix threads execution model. \\
\noindent\textbf{Workload \& methodology:}
we ran each experiment
for 20 seconds, and measured the overall number of operations executed by all the
threads(starting from 1, 10, 20 to 60). The graphs shown in the Fig \ref{fig:scc} \& \ref{fig:scc-app} are the total number of operations executed by all threads. In all the tests, we ran each evaluation 8 times and took the average.\\
The algorithms we compare are, 
(1). Sequential(only one thread and no lock) with partial (without removing vertices, \emph{Seq-woDV}) and fully(\emph{Seq}) dynamic, (2). Coarse-grained(only one spin lock) with partial (without removing vertices, \emph{Coarse-woDV}) and fully(\emph{Coarse}) dynamic, (3). SMSCC with partial (without removing vertices, \emph{SMSCC-woDV}) and fully(\emph{SMSCC}) dynamic(with and without DIE). Each thread performed, in the Fig \ref{fig:scc-50-50}, 50\% add(V+E) and 50\% rem(V+E), in the Fig \ref{fig:scc-90-10}, 90\% add(V+E) and 10\% rem(V+E)  and 10\% add(V+E) and in the Fig \ref{fig:scc-10-90}, 90\% rem(V+E). The Fig \ref{fig:scc-app} shows the throughputs of \emph{Seq}, \emph{Coarse}, \emph{SMSCC}(without DIEs) and \emph{SMSCC-DIE}(with DIEs). Similarly, the Fig \ref{fig:scc-inc} and \ref{fig:scc-dec} depicts the incremental(SMISCC) \& decremental(SMDSCC) throughputs respectively and Fig \ref{fig:scc-com} shows the community detection(80\% check and 20\% add \& rem). 

After executing all above micro benchmarks, SMSCC (with \& without DIEs) perform efficiently over Sep and Coarse. The Fig \ref{fig:scc} and \ref{fig:scc-app} shows the performance is similar to the lazy linked list and the throughput is increased between $3$ to $7$X depending on different workload distributions and applications. 

\begin{figure}
\setlength{\belowcaptionskip}{-10pt}
\captionsetup{font=scriptsize}
    \begin{subfigure}[b]{0.32\textwidth}
    \setlength{\belowcaptionskip}{-7pt}
    \captionsetup{font=scriptsize}
    \caption{Add 50\% \& Rem 50\%}
        \centering
        \resizebox{\linewidth}{!}{
	\begin{tikzpicture} [scale=0.450]
	\begin{axis}[legend style={at={(0.5,1)},anchor=north},
	xlabel=No of threads,
	ylabel=Throughput ops/sec, ylabel near ticks]
	\addplot table [x=Threads, y=$Seq-woDV$]{results/scc5050.dat};
	\addplot table [x=Threads, y=$Seq$]{results/scc5050.dat};
	\addplot table [x=Threads, y=$coarse-woDV$]{results/scc5050.dat};
	\addplot table [x=Threads, y=$coarse$]{results/scc5050.dat};
	\addplot table [x=Threads, y=$SMSCC-woDV$]{results/scc5050.dat};
	\addplot table [x=Threads, y=$SMSCC$]{results/scc5050.dat};
	\end{axis}
	\end{tikzpicture}
        }
        \label{fig:scc-50-50}
    \end{subfigure}
    \begin{subfigure}[b]{0.32\textwidth}
    \setlength{\belowcaptionskip}{-7pt}
    \captionsetup{font=scriptsize}
    \caption{Add 90\% \& Rem10\%}   
    \centering
        \resizebox{\linewidth}{!}{
   	\begin{tikzpicture}[scale=0.450]
	\begin{axis}[legend style={legend columns=4,{at={(1.1,-0.2)}}},
	xlabel=No of threads,
	ylabel=Throughput ops/sec, ylabel near ticks]
	\addlegendentry{$Seq\text{-}woDV$}
	\addplot table [x=Threads, y=$Seq-woDV$]{results/scc9010.dat};
	\addlegendentry{$Seq$}
	\addplot table [x=Threads, y=$Seq$]{results/scc9010.dat};
	\addlegendentry{$Coarse\text{-}woDV$}
	\addplot table [x=Threads, y=$coarse-woDV$]{results/scc9010.dat};
	\addlegendentry{$Coarse$}
	\addplot table [x=Threads, y=$coarse$]{results/scc9010.dat};
	\addlegendentry{$SMSCC\text{-}woDV$}
	\addplot table [x=Threads, y=$SMSCC-woDV$]{results/scc9010.dat};
	\addlegendentry{$SMSC$}
	\addplot table [x=Threads, y=$SMSCC$]{results/scc9010.dat};
	\end{axis}
	\end{tikzpicture}
        }
        \label{fig:scc-90-10}
    \end{subfigure}
    \begin{subfigure}[b]{0.32\textwidth}
    \captionsetup{font=scriptsize}
    \setlength{\belowcaptionskip}{-7pt}
    \caption{Add 10\% \& Rem90\%}
        \centering
        \resizebox{\linewidth}{!}{
           \begin{tikzpicture}[scale=0.450]
	\begin{axis}[legend style={at={(0.5,0.6)},anchor=north},
	xlabel=No of threads,
	ylabel=Throughput ops/sec, ylabel near ticks]
    \addplot table [x=Threads, y=$Seq-woDV$]{results/scc1090.dat};
	\addplot table [x=Threads, y=$Seq$]{results/scc1090.dat};
	\addplot table [x=Threads, y=$coarse-woDV$]{results/scc1090.dat};
	\addplot table [x=Threads, y=$coarse$]{results/scc1090.dat};
	\addplot table [x=Threads, y=$SMSCC-woDV$]{results/scc1090.dat};
	\addplot table [x=Threads, y=$SMSCC$]{results/scc1090.dat};
	\end{axis}
	\end{tikzpicture}
        }
        \label{fig:scc-10-90}
    \end{subfigure}
        \vspace{-5mm}
\caption{SMSCC Execution with different workload distributions} 
\label{fig:scc}
\end{figure}
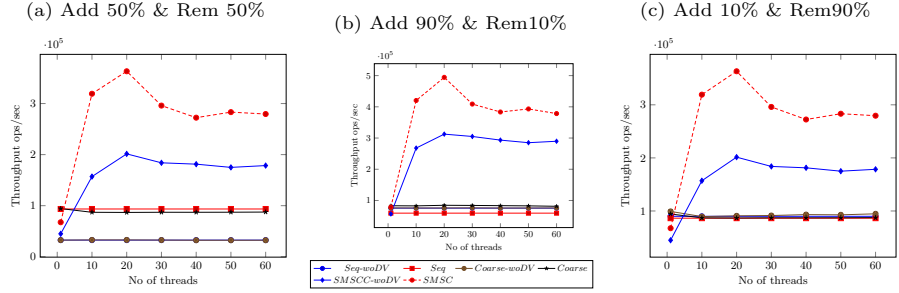
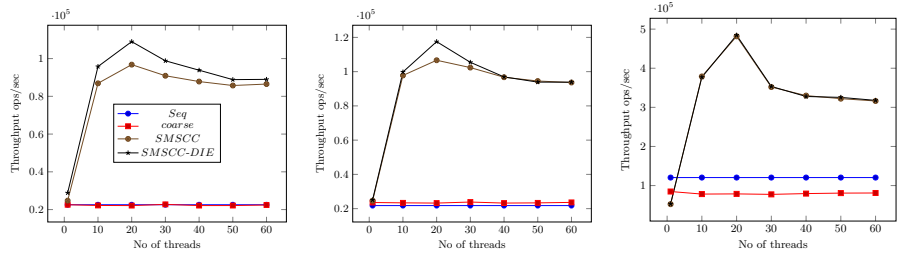
\begin{figure}
\setlength{\belowcaptionskip}{-7pt}
    \begin{subfigure}[b]{0.32\textwidth}
    \captionsetup{font=scriptsize}
    \captionsetup{font=scriptsize}
    \caption{SMISCC}
        \centering
        \resizebox{\linewidth}{!}{
	\begin{tikzpicture} [scale=0.450]
	\begin{axis}[legend style={at={(0.5,0.6)},anchor=north},
	xlabel=No of threads,
	ylabel=Throughput ops/sec, ylabel near ticks]
	\addlegendentry{$Seq$}
	\addplot table [x=Threads, y=$Seq$]{results/sccinc.dat};
	\addlegendentry{$coarse$}
	\addplot table [x=Threads, y=$coarse$]{results/sccinc.dat};
	\addlegendentry{$SMSCC$}
	\addplot table [x=Threads, y=$SMSCC$]{results/sccinc.dat};
	\addlegendentry{$SMSCC\text{-}DIE$}
	\addplot table [x=Threads, y=$SMSCC-DIE$]{results/sccinc.dat};
	\end{axis}
	\end{tikzpicture}
        }
        \label{fig:scc-inc}
    \end{subfigure}
    \begin{subfigure}[b]{0.32\textwidth}
    \captionsetup{font=scriptsize}
    \setlength{\belowcaptionskip}{-7pt}
    \caption{SMDSCC}   
    \centering
        \resizebox{\linewidth}{!}{
   	\begin{tikzpicture}[scale=0.450]
\begin{axis}[legend style={legend columns=4,{at={(1.5,-0.2)}}},
	xlabel=No of threads,
	ylabel=Throughput ops/sec, ylabel near ticks]
\addplot table [x=Threads, y=$Seq$]{results/sccdec.dat};
	\addplot table [x=Threads, y=$coarse$]{results/sccdec.dat};
	\addplot table [x=Threads, y=$SMSCC$]{results/sccdec.dat};
	\addplot table [x=Threads, y=$SMSCC-DIE$]{results/sccdec.dat};
	\end{axis}
	\end{tikzpicture}
        }
        \label{fig:scc-dec}
    \end{subfigure}
    \begin{subfigure}[b]{0.32\textwidth}
    \captionsetup{font=scriptsize}
    \setlength{\belowcaptionskip}{-7pt}
    \caption{Community Detection}
        \centering
        \resizebox{\linewidth}{!}{
           \begin{tikzpicture}[scale=0.450]
		\begin{axis}[legend style={legend columns=4,{at={(-0.7,-0.5)}}},
	xlabel=No of threads,
	ylabel=Throughput ops/sec, ylabel near ticks]
\addplot table [x=Threads, y=$Seq$]{results/scccomm.dat};
	\addplot table [x=Threads, y=$coarse$]{results/scccomm.dat};
	\addplot table [x=Threads, y=$SMSCC$]{results/scccomm.dat};
	\addplot table [x=Threads, y=$SMSCC-DIE$]{results/scccomm.dat};
	\end{axis}
	\end{tikzpicture}
        }
        \label{fig:scc-com}
    \end{subfigure}
    \vspace{-5mm}
    \setlength{\belowcaptionskip}{-13pt}
    \captionsetup{font=scriptsize}
\caption{(a)Incremental SCC(100\% Add(V+E)), (b)Decremental SCC(100\% Rem(V+E) and (c). Community detection( checking 80\% + update 20\% } 
\label{fig:scc-app}
\end{figure}

\section{Conclusion \& Future Direction}
\label{sec:conclusion-future-directions}

In this paper, we proposed a fully dynamic algorithm(SMSCC) for maintaining strongly connected component of a directed graph in a shared memory architecture. The edges/vertices are added or deleted concurrently by fixed number of threads. To the best of our knowledge, this is the first work to propose using linearizable concurrent \ds. We have constructed SCC-Graph using three type of nodes, \texttt{SCC node}, \texttt{vertex \& edge node}, which were build using list-based set, first one is unordered and later two are ordered list. We provide an empirical comparison against sequential, coarse-grained,  with different workload distributions. Also we compare the result with delete \& without delete incoming edges. We concluded that the performance show in the Fig. \ref{fig:scc} the throughput is increased between $3$ to $6$x.  Also  In Fig \ref{fig:scc-app} depicts one application of SCC-Graph to identify community in a random graph. We believe that there are huge applications in the on-line graph.

Currently the proposed update algorithms are blocking and deadlock-free. In the future, we plan to explore non-blocking(lock-free \& wait-free) variant of all the \mth{s} of SCC-Graph. We believe that one can develop a better optimization techniques to handle the SCC restoring after the edges/vertices are added or deleted.  Also we plan for other real world social graph applications.

\bibliographystyle{plain}

\bibliography{biblio}




\end{document}